\documentclass{article}

\usepackage{PRIMEarxiv}

\usepackage[utf8]{inputenc} 
\usepackage[T1]{fontenc}    
\usepackage{hyperref}       
\usepackage{url}            
\usepackage{booktabs}       
\usepackage{amsfonts}       
\usepackage{nicefrac}       
\usepackage{microtype}      
\usepackage{lipsum}
\usepackage{fancyhdr}       
\usepackage{graphicx}       
\graphicspath{{media/}}     

\usepackage{multirow}%
\usepackage{amsmath,amssymb}%
\usepackage{amsthm}%
\usepackage{mathrsfs}%
\usepackage[title]{appendix}%
\usepackage{xcolor}

\newcommand{\rev}[1]{#1}
\newcommand{\changed}[1]{#1}
\usepackage{textcomp}%
\usepackage{manyfoot}%
\usepackage{algorithm}%
\usepackage{algorithmicx}%
\usepackage{algpseudocode}%
\usepackage{listings}%
\usepackage{mathtools}
\usepackage{tikz}
\usetikzlibrary{shadows, arrows.meta, positioning, decorations.pathreplacing}
\usepackage{subcaption}
\usepackage{array}
\usepackage{bbm, dsfont}
\usetikzlibrary{calc,backgrounds}
 \pgfdeclarelayer{bg}
 \pgfsetlayers{bg,main}

\newcommand{\opt}{\text{OPT}}
\newcommand{\sw}{\text{SW}}

\newtheorem{theorem}{Theorem}
\newtheorem{lemma}{Lemma}
\newtheorem{proposition}{Proposition}
\newtheorem{corollary}{Corollary}
\newtheorem{definition}{Definition}
\newtheorem{remark}{Remark}
\newtheorem{example}{Example}

\usepackage{siunitx}
\title{\large Latency-Aware Resource Allocation over Heterogeneous Networks: \\
A Lorentz-Invariant Market Mechanism
}

\author{
  Saad Alqithami \\
  \texttt{alqithami@gmail.com} \\
}

\begin{document}
\maketitle

\begin{abstract}
We present a telecom-native auction mechanism for bandwidth and time-slot allocation across heterogeneous-delay networks, from low-Earth-orbit (LEO) satellite constellations to delay-tolerant deep-space relays. The Lorentz-Invariant Auction (LIA) treats bids as spacetime events and reweights reported values by the \emph{horizon slack}---a causal quantity derived from earliest-arrival times relative to a public clearing horizon. The contribution is not merely another delay-equalization rule: relative to batching, synchronized clearing, and speed-bump designs, LIA combines a causal-ordering formulation, a uniquely exponential slack correction implied by a semigroup-style invariance axiom, and a critical-value implementation that is truthful in reported values once slacks are fixed by trusted infrastructure. We therefore state the incentive result in the exogenous-slack regime and separately analyze bounded slack-estimation error and endogenous-delay limitations. Under fixed feasible slacks, LIA is individually rational and achieves welfare at least \(e^{-\lambda\Delta}\) relative to the optimal feasible allocation, where \(\Delta\) is the slack spread. We evaluate LIA on STARLINK-200, INTERNET-100, and DSN-30 across 52{,}500 baseline instances with market sizes \(n\in\{10,20,30,40,50\}\) and we run additional robustness sweeps. On Starlink and Internet, LIA remains near-efficient while eliminating measured timing rents; on DSN, welfare is lower in thin markets but improves with depth. We also distinguish winner-determination time from the background cost of maintaining slack estimates and study robustness beyond iid noise through error-spread bounds and structured (distance-biased and subnetwork-correlated) noise models. These results suggest that causal-consistent mechanism design offers a practical non-buffering alternative to synchronized delay equalization in heterogeneous telecom infrastructures.
\end{abstract}

\keywords{telecommunication systems\and auction-based resource allocation\and network latency\and truthful mechanisms\and fairness\and causal consistency.}

\section{Introduction}\label{sec1}

\subsection{Motivation and Context}

\changed{Telecommunication markets increasingly operate over infrastructures where one-way delays vary by orders of magnitude. A bid transmitted along a Chicago–New Jersey microwave route arrives within microseconds, whereas the same message forwarded through a low-Earth-orbit (LEO) constellation such as Starlink experiences tens of milliseconds of propagation delay \cite{handley2018delay,mohan2024starlink}. Recent empirical measurements of Starlink one-way delays \cite{garcia2025starlink} and statistical characterizations of end-to-end latency \cite{casparsen2026statistical} confirm that these delays are not merely random noise but exhibit structured, predictable periodicity. At planetary and interplanetary scales, the relevant delays extend to minutes or hours \cite{burleigh2003delay,yan2024routing}. Beyond technical performance, LEO satellite Internet is also intertwined with policy and geopolitical considerations \cite{remili2025tech}, and plays a foundational role in the roadmap toward 6G and beyond \cite{ntontin2025vision}. Despite this heterogeneity, many market-clearing rules still presume near-instantaneous communication and a unitary global timeline for collecting and ranking bids. Recent surveys in market design emphasize that deployment contexts in which communication constraints influence incentives and outcomes are no longer pathological edge cases; they are now central to the theory–practice interface for mechanisms that must live on real networks \cite{felicetti2025systematic,luong2026incentive}.}

\changed{The heterogeneity of communication delays in modern telecommunication networks presents significant challenges for resource allocation mechanisms. In a Starlink-like constellation with 1,500+ satellites orbiting at 550 km altitude, inter-satellite links experience delays ranging from 1.8 ms (adjacent satellites in the same orbital plane) to 47.3 ms (satellites in different planes on opposite sides of Earth). When allocating bandwidth for data transmission, satellites closer to the allocation decision point have a systematic advantage in traditional auction mechanisms, as their bids arrive earlier and can be processed before those from more distant satellites. Similarly, in a global internet backbone spanning multiple continents, one-way propagation delays between major nodes can range from 0.3 ms (co-located data centers) to 89.2 ms (transcontinental links). When allocating computational resources at edge servers, clients located near the servers have a significant timing advantage over distant clients, potentially leading to inefficient resource allocation and unfair advantages based solely on geographic proximity. At the extreme end, NASA's Deep Space Network, extended with hypothetical interplanetary relay stations, experiences one-way communication delays ranging from 8.3 minutes (Earth-Mars at closest approach) to 73.6 minutes (Earth-Jupiter at maximum separation). When scheduling communication windows for spacecraft, the extreme delay heterogeneity makes traditional synchronous allocation mechanisms practically unusable, as waiting for all bids to arrive would introduce unacceptable latency.}

\changed{These examples illustrate the fundamental tension between the assumptions of classical mechanism design and the physical realities of distributed communication systems. Classical auction theory, exemplified by the seminal work of Vickrey, Clarke, and Groves \cite{clarke1971multipart}, assumes that all participants can submit bids simultaneously and that the auctioneer has instantaneous access to all submitted bids. This assumption becomes increasingly problematic as communication networks span greater distances and exhibit more heterogeneous latency characteristics \cite{qiu2024settling,yang2018designing}. A perspective grounded in network physics and distributed systems makes clear why the global-clock abstraction is brittle. In special relativity, each participant observes events through a local causal frame, and simultaneity is frame dependent \cite{einstein1905electrodynamics,minkowski1952space}. Two bid emissions that appear simultaneous to one observer can have a definite temporal ordering for another. Distributed systems arrived at an analogous insight decades ago: causal consistency, rather than global synchronization, is the right primitive for reasoning about computations and decisions over links with nontrivial propagation \cite{lamport1978time,fidge1988timestamps,mattern1988time}. When decision and reaction times are comparable to signal velocities, the assumption of a single total order enforced by a global clock becomes both impractical as an engineering matter and inconsistent with the causal structure that constrains what information can influence a clearing decision. Modern fault-tolerant stacks further explore these trade-offs, but their goal is total-order replication rather than welfare-maximizing allocation \cite{zhong2023byzantine,tang2024improved}.}

\subsection{Challenges in Heterogeneous-Delay Environments}

\changed{The classical Vickrey–Clarke–Groves family of truthful mechanisms presupposes instantaneous awareness of competitors' bids and leans on a total temporal order to resolve conflicts \cite{clarke1971multipart}. Related optimal-auction results inherit the same synchrony idealization \cite{myerson1981optimal}. In regimes where propagation delays are material, these assumptions fail in structured ways that generate welfare losses, distributional concerns, and profitable timing games.}

\changed{The electronic-markets literature documents how small timing advantages are monetized and how they distort allocation outcomes \cite{aquilina2022quantifying,nber2025ai}. Engineering fixes such as speed bumps and frequent batch auctions pursue fairness by injecting controlled waiting, yet they retain a synchronous worldview and therefore treat symptoms rather than causes \cite{khapko2021speed,budish2015high}. Recent studies comparing frequent batch auctions to continuous-time auctions confirm that while batching benefits liquidity, it can harm market volatility \cite{ge2025frequent}. If the fundamental friction is that information cannot outrun light, then remedies that force a semblance of simultaneity by holding bids in buffers are at best partial and at worst counterproductive when delays are not uniform. This physical constraint has recently inspired formal treatments of relativistic asset pricing, which define causal domains analogous to light cones to model financial markets constrained by the finite speed of information \cite{cheah2025relativistic}.}

\changed{This observation helps position the novelty of LIA more precisely. Frequent batch auctions, common-deadline normalization, speed-bump delay equalization, and asynchronous market designs all try to reduce temporal advantage, but they do so by changing when the market clears or by delaying messages. LIA instead takes \emph{causal reachability} as primitive and changes how bids are compared at a public horizon, aligning with emerging relativistic frameworks for market microstructure \cite{cheah2025relativistic}. The distinctive claim is therefore the combination of (i) a causal-ordering formulation based on earliest arrival, (ii) a uniquely exponential slack correction, and (iii) a truthful critical-value implementation in the single-parameter regime induced by fixed exogenous slacks, which complements recent advances in truthful mechanism design for distributed networked agents \cite{zhong2026truthful}. (We provide a concrete numerical example of how this correction neutralizes timing rents in Example~\ref{ex:timing_rent}.)}

\changed{The challenges of resource allocation in heterogeneous-delay environments can be categorized into three main areas. First, fairness concerns arise because participants located near clearing infrastructure receive systematic priority solely because light travels to them faster. This geographic advantage is unrelated to the intrinsic value they place on the resource, leading to allocative inefficiencies and unfair outcomes. In financial markets, this phenomenon has led to the well-documented "latency arbitrage" problem, where market participants invest heavily in reducing communication delays to gain trading advantages \cite{aquilina2022quantifying,khapko2021speed}. Second, efficiency losses occur when allocation decisions do not account for heterogeneous delays, as resources may be assigned to participants who are not the highest-value users but simply the fastest to communicate. This leads to welfare losses that can be significant, especially in networks with high delay dispersion. Traditional mechanisms that attempt to address this by waiting for all bids to arrive introduce excessive latency and reduce throughput. Third, implementation challenges exist because practical deployment of allocation mechanisms in telecommunication networks requires algorithms that can operate efficiently at scale, often processing thousands of bids per second with minimal computational overhead. The mechanism must also be robust to measurement errors, network jitter, and potential strategic manipulation of reported delays.}

\subsection{Single-Item Model and Telecom Resources}

\changed{Each clearing instance in our model corresponds to a single indivisible time-slot, frequency band, or computational resource block that cannot be subdivided among multiple winners. This abstraction captures many practical scenarios in telecommunications: allocation of exclusive access to a satellite transponder during a specific time window, assignment of a dedicated network slice with guaranteed quality-of-service parameters, or reservation of computational resources at an edge server for latency-critical applications \cite{dalai2024novel,zhang2022scalable}. While many telecom resource allocation problems involve multiple items or combinatorial preferences, the single-item case represents a fundamental building block that must be solved efficiently before addressing more complex scenarios. We discuss extensions to multiple identical items and combinatorial settings in Section~\ref{sec:extensions}.}

\changed{The single-item model is particularly relevant in contexts such as satellite communication windows, where multiple ground stations compete for exclusive access to a satellite's communication capabilities during a specific time window, and only one station can be granted access at a time. It also applies to spectrum allocation in dynamic spectrum access systems, where a specific frequency band may be allocated to a single user for a limited time period, requiring efficient and fair allocation mechanisms. Furthermore, in computational offloading, edge computing resources may be allocated to process a single computationally intensive task at a time, with multiple users competing for these limited resources. Finally, in network function virtualization, virtual network functions may require exclusive access to specific hardware resources for a time window, necessitating allocation mechanisms that account for heterogeneous communication delays.}

\changed{This paper adopts causal consistency as a design primitive for resource allocation in telecommunication systems. We treat bandwidth and time-slot markets as distributed economic processes that run on top of links with finite signal velocity, including LEO inter-satellite networks, terrestrial backbones, and delay-tolerant deep-space relays. The organizing question is how to allocate scarce communication resources in a way that is incentive compatible and welfare efficient while remaining fair across heterogeneous delays and avoiding artificial waiting. The answer we develop is to replace global time with proper time measured along world-lines, to evaluate bids at a clearing horizon using only causally reachable information, and to translate causal reachability into an algorithmic structure that can be executed at scale without trusted synchronization services.}

\changed{We introduce the Lorentz-Invariant Auction (LIA), a mechanism that embeds bid events in Minkowski spacetime and discounts nominal values by the proper-time lapse from emission to the clearing horizon. The correction neutralizes proximity and propagation advantages by penalizing bids that have enjoyed more causal slack before the decision point. Because both proper-time lapses and monetary amounts are scalars, the ranking rule is frame independent \cite{einstein1905electrodynamics,minkowski1952space}. The precise mechanism-design claim is narrower than the original draft suggested: LIA is dominant-strategy truthful in \emph{reported values} once the slack inputs are fixed by trusted infrastructure, and its welfare guarantee is naturally stated relative to the best feasible bid. Figure~\ref{figspacetime} gives the geometric intuition for this correction: bids with larger horizon slack have had more opportunity to reach or react before the public horizon, so LIA discounts them more heavily at the allocation frontier. We also establish that any discount function satisfying the stated invariance and composition properties must be exponential, which is why the mechanism uses \(e^{-\lambda\delta_i}\) rather than an ad hoc linear or polynomial correction.}

\changed{Beyond conceptual alignment with causal structure, the approach admits an efficient implementation tailored to real-time clearing in heterogeneous-delay networks. For a fixed topology (or a snapshot updated on a control-plane timescale), we precompute shortest-path propagation delays to the clearing horizon and compute each bid's horizon slack in $O(1)$. For the single-item auction studied here, allocation and payments reduce to a single pass over discounted bids---$O(n)$ time and $O(1)$ additional space---making the mechanism practical for high-frequency deployment.}

\begin{figure}[t]
  \centering

\begin{tikzpicture}[x=1cm,y=1cm,>=Latex,line cap=round]
  \draw[->] (-0.8,0) -- (5.5,0) node[below right]{space};
  \draw[->] (0,-0.6) -- (0,4.8) node[above]{time};

  \draw[very thick] (-0.3,4) -- (5.3,4) node[right] {$H$};

  \path[fill=gray!20] (2,2.05) rectangle (2.7,2.85);
  \node at (2.35,2.65) {\footnotesize Slab};
  
  \draw[thick,->] (1,0) -- (1,4) node[above,yshift=6pt]{\footnotesize Bidder 1};
  \draw[thick,->] (3.6,0) -- (3.3,4) node[above,yshift=6pt]{\footnotesize Bidder 2};

  \fill (1,1) circle (1.6pt) node[below left=-2pt] {$e_1$};
  \fill (3.5,1.2) circle (1.6pt) node[below right=-2pt] {$e_2$};

  \draw[dashed] (1,1) -- ++( 2,2);
  \draw[dashed] (1,1) -- ++(-2,2);
  \draw[dashed] (3.5,1.2) -- ++( 2,2);
  \draw[dashed] (3.5,1.2) -- ++(-2,2);

  \draw[->] (1,1) -- (1,4) node[midway,right=2pt] {$\delta_1$};
  \draw[->] (3.5,1.2) -- (3.3,4) node[midway,left=2pt] {$\delta_2$};

  \draw[very thick,red]
    (0.2,2.35) .. controls (2.5,2.2) and (3.6,2.05) .. (5.0,1.9)
      node[pos=0.78,below,sloped,fill=white,inner sep=1.2pt,text=red]
      {\footnotesize allocation frontier};
\end{tikzpicture}

  \caption{Spacetime view of two bids. Dashed lines are light-cone boundaries under unit aspect ($c{=}1$). The shaded rectangle marks a causal slab where bids are incomparable; the red line is the frontier where horizon-discounted bids are compared (i.e., where $\theta_1 e^{-\lambda \delta_1}=\theta_2 e^{-\lambda \delta_2}$). Its position depends on the values $(\theta_1,\theta_2)$, the parameter $\lambda$, and how each bid's horizon slack $\delta_i$ changes with emission time. Proper-time lapses $\delta_i$ (horizon slacks) are measured along each worldline from emission to the clearing horizon $H$, with $\delta_i \equiv \Delta\tau_i$.}
  \label{figspacetime}
\end{figure}
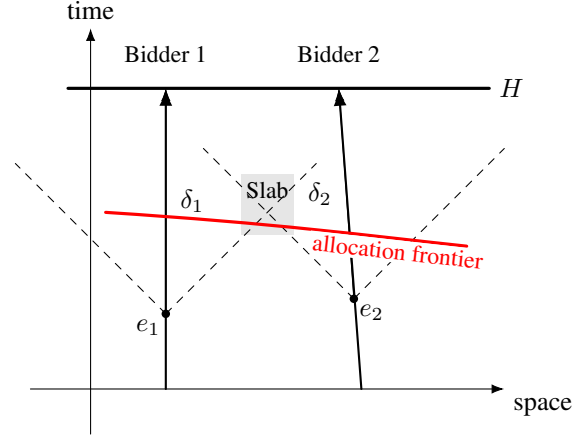

\subsection{Our Contributions}

\changed{We make four primary contributions to the study of resource allocation in heterogeneous-delay networks. First, we introduce a spacetime formalism for mechanism design, modeling bids as \emph{spacetime events} and defining the \emph{horizon slack}---the causal margin by which a bid can still reach a public clearing horizon under earliest-arrival routing. We prove that, under mild invariance and time-consistency axioms, the only discount function compatible with this causal slack is exponential. Second, we propose the Lorentz-Invariant Auction (LIA), which discounts values by horizon slack and then runs a truthful VCG-style allocation and payment rule on the discounted bids. The formal incentive theorem is stated in the correct single-parameter regime: truthful value reporting is dominant and the mechanism is individually rational \emph{conditional on fixed exogenous slacks}. We also establish a welfare guarantee relative to the best feasible bid, with a loss factor of \(1-e^{-\lambda\Delta}\) governed by slack dispersion, and we formalize a \emph{Latency-Arbitrage Index} (LAI) that measures the marginal utility gain from reducing propagation delay. Third, we provide an efficient implementation of LIA using standard shortest-path delay estimation to compute slacks. For the single-item auction studied here, winner determination and payments require only a single pass over discounted bids ($O(n)$ given precomputed delays). In our prototype, winner determination runs in $\mathcal{O}(10)\,\mu\mathrm{s}$ for $n\le 50$ and remains under $0.2$\,ms at $n=1000$. Finally, we conduct a comprehensive empirical evaluation and robustness analysis of LIA on STARLINK-200, INTERNET-100, and DSN-30 across hundreds of thousands of Monte Carlo instances. The experiments explicitly situate LIA against waiting-based fairness designs---Sync-VCG/common-deadline normalization, HoldBack/speed-bump equalization, and Batch-VCG/frequent batch clearing---and we distinguish auction runtime from the background cost of maintaining the slack map. On STARLINK-200 and INTERNET-100 with \(n\in\{10,20,30,40,50\}\), LIA attains welfare ratios of approximately \(0.996\)--\(0.997\) while driving the estimated timing rent (LAI) to zero; on DSN-30 the small-market regime is harsher, but scaling to \(n=1000\) yields a welfare ratio of about \(0.996\). Robustness sweeps with up to \(10\)\,ms slack-measurement error (iid, systematic clock-bias, distance-biased, and subnetwork-correlated models) show negligible welfare degradation and no re-emergence of timing rents.}

\subsection{Related Work}

\subsubsection{Auction Theory and Mechanism Design}

\changed{The theoretical foundation of our work builds upon classical auction theory and mechanism design. The Vickrey-Clarke-Groves (VCG) mechanism \cite{clarke1971multipart} provides a framework for designing truthful mechanisms in quasi-linear environments. Myerson's optimal auction design \cite{myerson1981optimal} characterizes revenue-maximizing mechanisms for single-parameter settings. These foundational works, however, assume synchronous communication and instantaneous bid processing, which become problematic in networks with heterogeneous delays.}

\changed{Recent work has begun to address communication constraints in mechanism design. Qiu and Weinberg \cite{qiu2024settling} study the communication complexity of VCG-based mechanisms, establishing lower bounds on the amount of information that must be exchanged to achieve various approximation guarantees. Yang et al. \cite{yang2018designing} propose distributed auction mechanisms for wireless spectrum allocation, but their approach still relies on synchronization primitives that may not scale to highly heterogeneous delay environments.}

\changed{Felicetti et al. \cite{felicetti2025systematic} provide a systematic review of auction mechanisms, highlighting the growing importance of deployment contexts where communication constraints influence incentives and outcomes. Their work emphasizes the need for mechanisms that can operate efficiently in realistic network environments, a challenge that our Lorentz-Invariant Auction directly addresses.}

\subsubsection{Latency Arbitrage and Market Microstructure}

\changed{The financial markets literature has extensively studied the phenomenon of latency arbitrage, where market participants gain advantages through faster access to information. Aquilina et al. \cite{aquilina2022quantifying} quantify the "arms race" in high-frequency trading, documenting substantial investments in reducing communication latency to gain trading advantages. Dou et al. \cite{nber2025ai} examine how AI-powered trading affects algorithmic collusion and price efficiency, highlighting the role of latency in strategic behavior.}

\changed{Various remedies have been proposed to address latency arbitrage in financial markets. Khapko and Zoican \cite{khapko2021speed} study the effectiveness of "speed bumps" that introduce artificial delays to equalize arrival times. Budish et al. \cite{budish2015high} propose frequent batch auctions as an alternative market design that processes trades in discrete time intervals rather than continuously. While these approaches mitigate some aspects of latency arbitrage, they retain a synchronous worldview and may introduce inefficiencies when delays are highly heterogeneous.}

\changed{Our work differs from these approaches by directly incorporating causal structure into the mechanism design, rather than attempting to enforce synchrony through artificial delays. This allows LIA to address latency fairness while maintaining efficiency and incentive compatibility.}

\changed{To clarify the mechanism design space, we can systematically position LIA against existing approaches to latency in electronic markets:
\begin{itemize}
    \item \textbf{Continuous Clearing (e.g., Fast-VCG):} Clears immediately upon bid arrival. Minimizes latency but maximizes timing rents, heavily favoring proximate bidders.
    \item \textbf{Frequent Batch Auctions (Batch-VCG):} Buffers bids into fixed time windows. Reduces timing rents but imposes artificial waiting on all participants, tracing a Pareto frontier between fairness and latency.
    \item \textbf{Speed Bumps / Delay Equalization (HoldBack):} Artificially delays fast connections to match the slowest participant. Achieves fairness but forces the entire system to operate at the speed of the most distant node.
    \item \textbf{Lorentz-Invariant Auction (LIA):} Uses causal-consistent pricing rather than buffering. Achieves perfect fairness (zero timing rent) without artificial holdbacks, by mathematically discounting the value of the timing advantage itself.
\end{itemize}
This structural difference is why LIA can maintain high welfare and strict fairness even under structured network noise, as demonstrated in our empirical evaluation.}

\subsubsection{Distributed Systems and Causal Consistency}

\changed{The theoretical foundation of our work draws heavily from distributed systems research on causal consistency and partial ordering of events. Lamport's seminal paper \cite{lamport1978time} introduced the happens-before relation, which provides a partial ordering of events in a distributed system without requiring global time synchronization. This concept was further developed through vector clocks \cite{fidge1988timestamps,mattern1988time}, which provide a practical mechanism for tracking causal dependencies.}

\changed{Schwarz and Mattern \cite{schwarz1994detecting} explore the challenge of detecting causal relationships in distributed computations, highlighting the fundamental difficulty of establishing a total order on events in a distributed system. Recent work by Zhao et al. \cite{zhao2024detecting} applies machine learning techniques to detect dynamical causality in complex systems, demonstrating the continued relevance of causal analysis in distributed environments.}

\changed{The connection between causality and fault tolerance has been extensively studied in the distributed systems literature. Zhong et al. \cite{zhong2023byzantine} survey Byzantine fault-tolerant consensus algorithms, which must operate correctly despite arbitrary failures and communication delays. Tang et al. \cite{tang2024improved} propose an improved Byzantine fault-tolerant consensus mechanism that leverages causal relationships to enhance efficiency and robustness.}

\changed{Our contribution extends this causal perspective to mechanism design, where the challenge is not merely to maintain consistency across distributed computations, but to ensure that economic incentives remain aligned in the presence of heterogeneous communication delays.}

\subsubsection{Telecommunication Networks and Resource Allocation}

\changed{The practical motivation for our work comes from the challenges of resource allocation in modern telecommunication networks. Remili et al. \cite{remili2025tech} examine the geopolitical implications of LEO satellite internet constellations, highlighting the growing importance of these networks in global communications infrastructure. Zhang and Yeung \cite{zhang2022scalable} propose scalable routing algorithms for LEO satellite constellations, addressing the challenges of rapidly changing network topologies and heterogeneous link characteristics.}

\changed{Dalai et al. \cite{dalai2024novel} present a novel space-based hosting approach for ultra-low-latency web services, demonstrating the potential of satellite networks to support latency-sensitive applications. Yan et al. \cite{yan2024routing} develop routing protocols for future space-terrestrial integrated networks, addressing the challenges of integrating satellite and terrestrial communication systems. Furthermore, recent surveys highlight the growing importance of incentive mechanisms—including auction-based, game-based, and learning-based approaches—for resource management in satellite networks \cite{luong2026incentive}. For instance, auction-based frameworks have been proposed for dynamic pricing and scheduling in LEO networks \cite{li2025dynamic}.}

\changed{At the extreme end of the delay spectrum, Burleigh et al. \cite{burleigh2003delay} introduce delay-tolerant networking as an approach to interplanetary Internet, addressing the challenges of communication in environments where delays can extend to minutes or hours. These works highlight the diverse range of delay characteristics in modern telecommunication networks and motivate the need for allocation mechanisms that can operate effectively across this spectrum.}

\subsubsection{Physics-Aware Computing and Relativistic Algorithms}

\changed{Our approach draws inspiration from physics-aware perspectives on computation and algorithm design. Einstein's special relativity \cite{einstein1905electrodynamics} and Minkowski's geometric formulation of spacetime \cite{minkowski1952space} provide the mathematical foundation for our treatment of causal structure and proper time.}

\changed{Németi and Dávid \cite{nemeti2006relativistic} explore the theoretical implications of relativistic computers, examining how the principles of relativity might enable computational capabilities beyond the Turing barrier. Andréka et al. \cite{andreska2009general} investigate general relativistic hypercomputing and its implications for the foundations of mathematics, highlighting the profound connections between physical theories and computational models.}

\changed{Frank \cite{frank2020fundamental} examines the physics of reversible computation, exploring how fundamental physical principles constrain and enable computational processes. Piotrowski and Sładkowski \cite{piotrowski2003invitation} introduce quantum game theory, applying quantum mechanical principles to strategic interactions.}

\changed{While our work does not rely on exotic physical effects or quantum phenomena, it shares with these approaches a recognition that physical principles can inform algorithm and mechanism design in fundamental ways. By grounding our auction mechanism in the causal structure of spacetime, we develop a design that respects the physical constraints of communication networks while achieving desirable economic properties.}

\subsection{Physics-Aware Perspectives on Computation and Incentives}

\changed{Principles of special relativity motivate our insistence that outcomes must not depend on an arbitrarily chosen inertial frame: invariance of physical law across frames is the baseline consistency condition \cite{einstein1905electrodynamics,minkowski1952space}. Broader physics-aware research has contemplated the algorithmic implications of relativity, reversible computing, and quantum strategy spaces \cite{nemeti2006relativistic,andreska2009general,frank2020fundamental,piotrowski2003invitation}. In the present work the ``physics'' is minimal: proper-time lapse becomes a pricing primitive that neutralizes timing rents created by finite signal velocity. This substitution is sufficient to reconcile economic desiderata with the causal structure of communication networks and to expose a clean boundary between designs that rely on synchrony and designs that are consistent with the geometry of spacetime.}

\changed{The theoretical foundation builds upon the recognition that distributed systems and relativistic physics share a common mathematical structure in their treatment of causality. The happens-before relation introduced by Lamport \cite{lamport1978time} and subsequently formalized through vector clocks \cite{fidge1988timestamps,mattern1988time} provides a partial ordering on events that respects causal dependencies without requiring global time synchronization. This insight has proven fundamental to the design of distributed algorithms and fault-tolerant systems \cite{schwarz1994detecting,zhao2024detecting,abdulla2024optimal,spore2024,tang2024improved}.}

\changed{Our contribution extends this causal perspective to mechanism design, where the challenge is not merely to maintain consistency across distributed computations, but to ensure that economic incentives remain aligned in the presence of heterogeneous communication delays. The key insight is that proper-time intervals, being Lorentz scalars, provide a frame-independent measure of temporal separation that can be incorporated directly into auction rules without violating the fundamental symmetries of spacetime.}

\changed{The physics-aware perspective offers several advantages for mechanism design in telecommunication networks:
\begin{itemize}
\item \textbf{Frame Independence:} By using Lorentz scalars (proper-time intervals and monetary values) as the basis for allocation and pricing decisions, the mechanism's outcomes do not depend on the choice of reference frame. This ensures consistency across different observers and network segments.
\item \textbf{Causal Consistency:} The mechanism respects the causal structure of spacetime, ensuring that allocation decisions are based only on information that could in principle be available at the clearing horizon. This eliminates reliance on global synchronization primitives that may be impractical or impossible in heterogeneous-delay environments.
\item \textbf{Natural Fairness:} The exponential discounting of bids based on proper-time lapses provides a natural way to balance the informational advantages of proximity against the economic value of bids. This addresses fairness concerns without introducing artificial delays or arbitrary waiting periods.
\item \textbf{Theoretical Grounding:} The connection to relativistic physics provides a solid theoretical foundation for the mechanism design, allowing us to leverage established mathematical results and physical principles to derive and analyze the properties of the auction.
\end{itemize}
While the physics-aware perspective provides valuable insights and design principles, it is important to note that our mechanism does not require participants to understand relativistic physics or perform complex calculations. The implementation details are handled by the auction platform, which translates the conceptual framework into practical algorithms that can be executed efficiently in real-world telecommunication systems.}

\subsection{Synthesis and Distinctions}

\changed{The foregoing literatures collectively chart the space of relevant constraints and tools but leave a specific gap unfilled: a truthful, individually rational auction that remains near-efficient when end-to-end delays are heterogeneous and potentially very large, and that can be implemented without trusted synchrony while scaling to thousands of bidders per clearing instance. Communication-complexity bounds illuminate informational limits but abstract away propagation; market-microstructure remedies mitigate races by batching and therefore inherit synchrony ceilings; distributed systems provide causal order and BFT replication but aim at total-order logs rather than welfare maximization; space-networking works deliver routing and orchestration but not incentive-compatible clearing; physics-aware approaches tend not to connect to deployable market algorithms.}

\changed{LIA synthesizes these threads by grounding allocation and payment in causal reachability, preserving dominant-strategy truthfulness under standard assumptions \cite{myerson1981optimal,clarke1971multipart}, establishing a welfare approximation that degrades smoothly with physical delay dispersion, and providing an implementation based on shortest-path slack computation and linear-time winner determination. The empirical evidence reported in this paper confirms that these design choices yield high welfare, substantially reduced latency-arbitrage incentives, and microsecond-scale winner determination across three representative topologies and hundreds of thousands of instances, aligning theory with telecom-scale practice.}

\begin{table*}[t]
\centering
\caption{\changed{Positioning LIA relative to other latency-fairness design classes.}}
\label{tab:design-classes}
\footnotesize
\setlength{\tabcolsep}{3.8pt}
\begin{tabular}{p{2.7cm}p{3cm}p{1.1cm}p{1.3cm}p{1.3cm}p{1.55cm}p{3.5cm}}
\toprule
Design family & Core idea & Artificial waiting? & Global schedule? & Per-bid causal slack? & Value-DSIC with fixed delay weights? & Typical limitation in heterogeneous networks \\
\midrule
Frequent batch auctions & Periodic clearing on a public batch clock & Yes & Yes & No & Yes & Fairness improves only by buffering; batch size directly increases waiting \\
Speed bumps / HoldBack & Delay early arrivals to equalize access & Yes & Usually yes & No & Yes & Requires trusted delay equalization and can impose unnecessary holdback \\
Common-deadline normalization & Wait until a public horizon and clear on raw values & Yes & Yes & No & Yes & Eliminates order advantage only by delaying everyone to the deadline \\
Asynchronous / distributed auctions & Clear under partial arrivals or replicated logs & Low / variable & Not necessarily & Usually no & Case specific & Often preserves residual propagation advantage or sacrifices simple truthfulness \\
\changed{LIA} & \changed{Clear at a public horizon using value weighted by horizon slack from earliest arrival} & \changed{No extra holdback} & \changed{Public horizon, but no synchronized buffering} & \changed{Yes} & \changed{Yes, conditional on fixed exogenous slacks} & \changed{Requires trusted or bounded-error slack estimates; endogenous slack lies outside the single-parameter theorem} \\
\bottomrule
\end{tabular}
\end{table*}

\changed{Table~\ref{tab:design-classes} makes the novelty claim operational. LIA is not simply another fairness layer placed on top of synchronous clearing. Its novelty lies in using a causal slack oracle as an input to the economic rule itself, which is why the mechanism can avoid explicit delay equalization while still admitting a critical-value characterization once those slack inputs are fixed.}

\rev{
\paragraph{Mechanism overview.}
LIA targets allocation settings where communication latency differs substantially across participants and where waiting for globally synchronized clearing can induce meaningful opportunity cost. Each bid is treated as a space-time event \(e_i=(v_i,\tau_i)\). The auctioneer computes each bidder’s earliest-arrival time \(T_H(v_i,\tau_i)\) and horizon slack \(\delta_i:=\tau_H-T_H(v_i,\tau_i)\), then discounts the reported value by a proper-time factor \(\tilde b_i=b^{val}_i e^{-\lambda \delta_i}\). Winner determination maximizes discounted welfare subject to feasibility and payments follow the corresponding critical-value rule. The design objective is to suppress propagation advantages without requiring hold-back buffering to \(\tau_H\).
}

\subsection{Paper Organization}

The remainder of this paper is organized as follows. \S\ref{sec:model} presents the mathematical model and theoretical foundations, including the spacetime representation of bids, the definition of horizon slack, and the characterization of the exponential discount function. \S\ref{subsec:lia-mechanism} introduces the Lorentz-Invariant Auction mechanism, establishes its incentive properties, and analyzes its welfare guarantees. \S\ref{sec:algorithms} develops efficient algorithms for implementing LIA based on shortest-path slack computation and linear-time winner determination. \S\ref{sec:experiments} presents a comprehensive experimental evaluation across three representative telecommunication topologies. \S\ref{sec:lower-bounds} discusses waiting costs and design implications for heterogeneous-delay environments. \S\ref{sec:discussion} discusses implications, limitations, and future directions. \S\ref{sec:conclusion} concludes the paper.

\section{Mathematical Model and Theoretical Foundations}
\label{sec:model}

\changed{We develop a model that couples the causal geometry of Minkowski space–time with strategic bidding over real communication networks. The design principle is that every quantity used for allocation and pricing is either a Lorentz scalar or a function of Lorentz scalars; outcomes therefore do not depend on an arbitrarily chosen inertial frame \cite{einstein1905electrodynamics,minkowski1952space}. The causal relations are the relativistic counterpart of the happens-before partial order that underpins causal consistency in distributed systems \cite{lamport1978time,fidge1988timestamps,mattern1988time}.}

\changed{
\subsection{Notation Summary}
Table~\ref{tab:notation} summarises the core symbols used throughout the paper.
}
\changed{
\begin{table}[t]
\centering
\caption{Notation summary.}
\label{tab:notation}
\footnotesize
\setlength{\tabcolsep}{4pt}
\begin{tabular}{ll}
\toprule
Symbol & Meaning \\
\midrule
\(\tau_H\) & Clearing horizon time on the auctioneer worldline \(\gamma_H\) \\
\(T_H(v,\tau)\) & True earliest-arrival time at \(H\) for a bid event \((v,\tau)\) \\
\(\hat T_H(v,\tau)\) & Estimated earliest-arrival time used by the controller in deployment \\
\(\delta_i\) & Horizon slack for bidder \(i\), \(\delta_i:=\tau_H-T_H(v_i,\tau_i)\) \\
\(\hat\delta_i\) & Estimated slack, \(\hat\delta_i:=\tau_H-\hat T_H(v_i,\tau_i)\) \\
\(\lambda\) & Discount rate (units \( \text{time}^{-1}\)) used in \(e^{-\lambda \delta_i}\) \\
\(\tilde b_i\) & Discounted bid value, \(\tilde b_i:=b^{val}_i e^{-\lambda \delta_i}\) \\
\(\opt_{\mathrm{feas}}\) & Best feasible value, \(\max_{i:\delta_i\ge 0}\theta_i\) \\
\(\opt_{\mathrm{all}}\) & Best value in the full instance, \(\max_i \theta_i\) \\
\(\mathrm{LAI}\) & Latency Arbitrage Index (timing-rent proxy) \\
\bottomrule
\end{tabular}
\end{table}
}

\subsection{Unit Conventions and Dimensional Consistency}

\changed{Throughout this paper, we maintain strict unit consistency to ensure that all theoretical results and experimental measurements can be directly compared. All temporal quantities—delays, latencies, and time intervals—are measured in milliseconds (ms). This includes:
\begin{itemize}
\item Propagation delays between network nodes (ms)
\item Horizon slacks $\delta_i$ (ms)
\item Latency dispersion parameter $\Delta$ (ms)
\item Timing measurement errors $\varepsilon$ (ms)
\end{itemize}
The mechanism parameter $\lambda$ has units of inverse time, ensuring that the product $\lambda\Delta$ in welfare bounds such as $\sw/\opt \ge e^{-\lambda\Delta}$ is dimensionless. Throughout, delays and slacks are reported in milliseconds; when we specify $\lambda$ in seconds$^{-1}$ (as in the experiments), we convert to milliseconds$^{-1}$ via $\lambda_{\mathrm{ms}^{-1}} = \lambda_{\mathrm{s}^{-1}}/1000$. This keeps $\lambda\delta$ dimensionless while allowing direct comparison across topologies.}

\subsection{Networked Space–Time and Earliest Arrival}\label{subsec:horizon-slack}

\changed{Let \(G=(V,E,\mathcal{W})\) be a directed multigraph of communication sites and links. Each \(v\in V\) is a physical site (ground station, data center, LEO satellite, spacecraft). Each arc \(e=(u\!\to\!v)\in E\) is a link with one-way propagation time \(\mathcal{W}(e,t)>0\) (in ms) for a signal emitted at local time \(t\). Static terrestrial links admit \(\mathcal{W}(e,t)\equiv \mathcal{W}(e)\); moving platforms (e.g., inter-satellite links) are modeled by time-varying \(\mathcal{W}(e,t)\) reflecting changing line-of-sight and group velocity \cite{handley2018delay,zhang2022scalable,burleigh2003delay}. We assume \(\mathcal{W}(e,t)\ge \ell_e/c\), where \(\ell_e\) is the geometric length and \(c\) is the speed of light; additional refraction and switching delays can be added without affecting the invariance arguments below.}

\changed{The system is embedded in flat Minkowski space \((\mathbb{R}^{3+1},\eta)\) with signature \((-,+,+,+)\). Each site \(v\) follows an inertial worldline \(\gamma_v:\tau\mapsto (t_v(\tau),x_v(\tau),y_v(\tau),z_v(\tau))\), parametrized by proper time \(\tau\). An \emph{event} is a point \(e=(v,\tau)\) on \(\gamma_v\). Causality is the standard light-cone order: \(e_1\preceq e_2\) if a future-directed causal curve connects them; if neither precedes the other they are space-like separated. These relations are Lorentz invariant \cite{einstein1905electrodynamics,minkowski1952space} and generalize the happens-before relation \cite{lamport1978time}.}

\changed{Fix a clearing site \(v_H\) with worldline \(\gamma_H\) and a clearing-horizon event \(H=\gamma_H(\tau_H)\). A bid emitted at \(e=(v,\tau)\) is feasible for \(H\) if and only if \(e\in J^{-}(H)\), the causal past of \(H\). The earliest time at which a signal from \(e\) can reach the auctioneer is the \emph{earliest-arrival time}
\[
T_H(v,\tau)\;:=\;\inf_{\pi\in\mathcal{P}(v\to v_H)} A_\pi(\tau),
\]
where \(\pi=e_1e_2\cdots e_m\) ranges over directed paths in \(G\) and the path-arrival recursion is
\[
A_{e_1}(\tau)=\tau+\mathcal{W}(e_1,\tau),
\]
\[
A_{e_k}(\tau)=A_{e_{k-1}}(\tau)+\mathcal{W}\!\left(e_k,\,A_{e_{k-1}}(\tau)\right)\ (k\ge 2).
\]
If no causal path exists then \(T_H(v,\tau)=+\infty\) and the bid is infeasible. In static networks \(T_H\) reduces to a standard shortest travel time; in time-varying networks it is the earliest-arrival time of a temporal graph.}

\changed{
\begin{definition}[Horizon slack]
\label{def:horizon-slack}
For a feasible bid \(e=(v,\tau)\) with respect to \(H=\gamma_H(\tau_H)\), the \emph{horizon slack} is
\[
\delta_H(v,\tau)\;:=\;\tau_H - T_H(v,\tau)\;\in\;\mathbb{R}_{\ge 0}
\]
measured in milliseconds (ms). Both \(T_H(v,\tau)\) and \(\tau_H\) are instants on \(\gamma_H\); hence \(\delta_H(v,\tau)\) is a proper-time interval on \(\gamma_H\) and is Lorentz invariant. Feasible bids satisfy \(\delta_H\ge 0\). We use the notation \(\delta_i \equiv \delta_H(v_i,\tau_i)\) for agent \(i\)'s horizon slack.
\end{definition}
}

\changed{
\begin{remark}[Notation consistency]
In some spacetime diagrams and earlier literature, the symbol $\Delta\tau_i$ has been used to denote the proper-time lapse from emission to horizon. For consistency throughout this paper, we exclusively use $\delta_i$ to represent horizon slack, with the understanding that $\delta_i \equiv \Delta\tau_i$ when comparing to other works.
\end{remark}
}

\changed{
\begin{remark}[Operational interpretation and non-inertial effects]
LIA uses Minkowski terminology to express \emph{causal reachability} and coordinate-free dependence on earliest-arrival times. In a deployed network, \(\delta_H(v,\tau)\) is computed from measured or predicted delays in the underlying communication graph (possibly time varying); the mechanism does \emph{not} assume a physically flat spacetime, inertial motion, or relativistic time-dilation effects. Routing jitter, clock error, and accelerating platforms (e.g., LEO orbital motion) appear as time variation and uncertainty in the delay estimates and are handled by (i) recomputing \(T_H\) on the controller’s update cadence and (ii) the robustness evaluation in Section~\ref{subsec:error-robustness}.
\end{remark}
}

\changed{The horizon slack $\delta_i$ represents the amount of time that agent $i$'s bid has available at the clearing horizon before the allocation decision must be made. It is a key quantity in our mechanism design, as it directly influences the discounting applied to bids and the resulting allocation and payment rules.}

\changed{We use horizon slack directly in both the mechanism and the implementation; no additional graph-theoretic reduction is required. Operationally, slack is computed from estimated earliest-arrival times (static or time varying) and a public horizon, as detailed in Section~\ref{sec:algorithms}.}

\changed{\paragraph{True versus estimated earliest arrival.}
The structural results in Sections~\ref{thm:exponential-unique}--\ref{thm:welfare-bound} use the infrastructural earliest-arrival map \(T_H\) as a primitive. In practice, however, earliest arrival is not a purely geometric quantity: it depends on the route, queueing and congestion state, retransmission policy, clock discipline, and potentially asymmetric paths. We therefore distinguish the true object \(T_H\) from the controller-side estimator \(\hat T_H\) constructed from telemetry, authenticated ingress observations, and routing state. The practical mechanism uses \(\hat\delta_i:=\tau_H-\hat T_H(v_i,\tau_i)\); Section~\ref{subsec:error-robustness} makes explicit that what matters for welfare is the cross-bidder spread of estimation errors rather than independence per se.}

\changed{
\begin{example}[Horizon slacks in different network topologies]
Consider three agents submitting bids from different locations in our network topologies:
\begin{itemize}
\item \textbf{STARLINK-200:} Agent 1 is on a satellite in the same orbital plane as the auctioneer, with one-way propagation delay of 1.8 ms. If the clearing horizon is set 50 ms in the future, Agent 1's horizon slack is $\delta_1 = 50 - 1.8 = 48.2$ ms.
\item \textbf{INTERNET-100:} Agent 2 is at a data center with a transcontinental link to the auctioneer, with one-way propagation delay of 75 ms. With the same 50 ms clearing horizon, Agent 2's bid would be infeasible as $T_H(v_2,\tau_2) > \tau_H$. The auctioneer would need to set a horizon at least 75 ms in the future for Agent 2's bid to be feasible.
\item \textbf{DSN-30:} Agent 3 is on a Mars rover with one-way propagation delay of 498,000 ms (8.3 minutes) to Earth. With a clearing horizon set 600,000 ms (10 minutes) in the future, Agent 3's horizon slack is $\delta_3 = 600,000 - 498,000 = 102,000$ ms.
\end{itemize}
These examples illustrate how horizon slacks vary dramatically across different network topologies, highlighting the challenge of designing a mechanism that works effectively across such heterogeneous delay environments.
\end{example}
}

\subsection{Agents, Feasible Bids, and Strategic Choices}
\label{subsec:agents}

There are \(n\) agents \(i\in\{1,\dots,n\}\). Agent \(i\) resides at site \(v_i\) and has a private value \(\theta_i\ge 0\) for a single indivisible item. Utilities are quasilinear: if allocated and paying \(p_i\), then \(u_i=\theta_i-p_i\); otherwise \(u_i=0\).

\textbf{Type revelation and feasibility.}
Each agent learns her value at a local \emph{type-realization time} \(\sigma_i\) on \(\gamma_{v_i}\). A bid from \(i\) is a pair \(b_i=(b_i^{\mathrm{val}},e_i)\), where \(b_i^{\mathrm{val}}\ge 0\) and \(e_i=(v_i,\tau_i)\) with \(\tau_i\ge \sigma_i\). The bid is \emph{feasible} for the horizon \(H=\gamma_H(\tau_H)\) iff \(e_i\in J^{-}(H)\), equivalently iff the horizon slack \(\delta_i:=\delta_H(v_i,\tau_i)=\tau_H - T_H(v_i,\tau_i)\) is nonnegative. The mechanism computes slacks \(\{\delta_i\}\) from network measurements and the announced horizon.

\textbf{Strategy space and information.}
A (pure) strategy for agent \(i\) is a measurable map \(s_i:\theta_i \mapsto (r_i(\theta_i),\tau_i(\theta_i))\) with \(r_i(\theta_i)\in\mathbb{R}_{\ge 0}\) and \(\tau_i(\theta_i)\ge \sigma_i\), subject to feasibility \(\delta_H(v_i,\tau_i(\theta_i))\ge 0\). The network \(G(\cdot)\), the clearing site \(v_H\), and the horizon \(H\) are common knowledge; private values are independent of others' types unless stated otherwise. Randomized strategies are allowed but not needed for the results below.

\changed{\subsection{Exogenous-slack regime and scope of the incentive theorem}
The dominant-strategy result in this paper is a \emph{value-truthfulness} theorem proved after conditioning on a slack profile \((\delta_1,\ldots,\delta_n)\) that is fixed independently of reported values. Equivalently, the private type in the theorem is \(\theta_i\) only; the slack input is an infrastructure-determined parameter supplied by the measurement layer. This is the standard single-parameter regime and it matches managed deployments in which the controller binds each bid to an authenticated ingress point, access class, or relay plan before winner determination.

If bidders can strategically influence \(\delta_i\) through routing choice, point of attachment, relay selection, packetization, or discretionary transmission timing, then the environment becomes multi-parameter. In that enlarged domain LIA is \emph{not} automatically truthful: the slack input becomes a second strategic lever, so the usual critical-value characterization applies only to the value dimension conditional on the chosen slack. We therefore separate the paper into three layers: a structural layer with exogenous \(T_H\), a deployment layer with estimated \(\hat T_H\), and a limitation layer for endogenous delay manipulation. This separation is important both theoretically and operationally.}

\subsection{Latency Arbitrage Index (LAI) Definition}

A key metric for evaluating the fairness of auction mechanisms in heterogeneous-delay environments is the Latency Arbitrage Index (LAI), which quantifies the incentive for agents to reduce their communication delays. We provide a formal definition that will be used consistently throughout the paper, particularly in the experimental evaluation.

\begin{definition}[Latency Arbitrage Index]
\label{def:lai}
\rev{For agent \(i\) with true value \(\theta_i\) and current one-way propagation delay \(d_i\), let \(u_i(\theta_i,d)\) denote the agent’s expected utility when the effective delay is \(d\), with all other uncertainties averaged over the evaluation distribution. For any feasible delay reduction \(\Delta\in(0,d_i]\), define the latency-arbitrage gain}
\[
\rev{g_i(\Delta)\;:=\;\mathbb{E}\!\left[u_i(\theta_i, d_i-\Delta)-u_i(\theta_i,d_i)\right].}
\]
\rev{The Latency Arbitrage Index for agent \(i\) is the maximal gain over feasible reductions}
\[
\rev{\mathrm{LAI}_i\;:=\;\sup_{\Delta\in(0,d_i]} g_i(\Delta).}
\]
\rev{For a standardized marginal report, we also evaluate \(g_i(1\,\mathrm{ms})\) when \(d_i\ge 1\,\mathrm{ms}\). The population index is \(\mathrm{LAI}:=\frac{1}{n}\sum_{i=1}^n \mathrm{LAI}_i\).}
\end{definition}

\rev{
This definition captures the \emph{maximum} expected utility gain obtainable from feasible reductions in effective delay. Under a fixed clearing horizon, reducing one-way propagation delay by \(\Delta\) is equivalent to advancing a bid’s arrival time at the auctioneer by \(\Delta\), which can be operationalized by advancing the emission time in the counterfactual while holding the value fixed. Lower LAI values therefore indicate weaker incentives to invest in latency reduction (or other timing advantages), and thus improved fairness. All experimental results in Section~\ref{sec:experiments} use this definition (with consistent operationalization) when reporting fairness.
}

\begin{example}[LAI calculation]
Consider an agent with value $\theta = 100$ bidding in a first-price auction. If the agent's current delay is $d = 10$ ms, and reducing this delay to $d' = 9$ ms increases their probability of winning from 0.6 to 0.65 while keeping their optimal bid unchanged at $b = 80$, then:
\begin{align*}
u(100, 10) &= 0.6 \times (100 - 80) = 12\\
u(100, 9) &= 0.65 \times (100 - 80) = 13\\
\text{g(1\,ms)} &= u(100, 9) - u(100, 10) = 1
\end{align*}

\rev{I.e., the agent's marginal gain from a $1\,\mathrm{ms}$ delay reduction is $g(1\,\mathrm{ms})=1$. If $1\,\mathrm{ms}$ is the largest (or most profitable) feasible improvement, then the agent’s \(\mathrm{LAI}\) equals 1 in this example.}
\end{example}

The LAI provides a principled way to quantify the fairness properties of auction mechanisms in heterogeneous-delay environments. A key goal of the LIA mechanism is to minimize the LAI, thereby reducing incentives for strategic investments in communication infrastructure solely to gain timing advantages.

\subsection{The Exponential Discount: Uniqueness and Characterization}\label{subsec:discount-uniqueness}

Before presenting the LIA mechanism, we establish that the exponential discount function is not merely a modeling choice, but emerges uniquely from natural consistency requirements. This result provides strong theoretical justification for our design choice.

\begin{theorem}[Uniqueness of exponential discount]
\label{thm:exponential-unique}
Let \(\phi: \mathbb{R}_{\geq 0} \to (0,1]\) be a discount function satisfying:
\begin{enumerate}
\item \textbf{Lorentz invariance:} \(\phi\) depends only on proper-time intervals (horizon slacks)
\item \textbf{Time consistency:} \(\phi(\delta_1 + \delta_2) = \phi(\delta_1) \cdot \phi(\delta_2)\) for all \(\delta_1, \delta_2 \geq 0\)
\item \textbf{Continuity:} \(\phi\) is continuous with \(\phi(0) = 1\)
\item \textbf{Monotonicity:} \(\phi\) is strictly decreasing
\end{enumerate}
Then \(\phi(\delta) = e^{-\lambda \delta}\) for some \(\lambda > 0\) (in ms\(^{-1}\)).
\end{theorem}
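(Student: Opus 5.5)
The plan is to reduce the statement to Cauchy's additive functional equation on the half-line by taking logarithms, and then to use continuity to force linearity. Lorentz invariance (condition~1) plays only a bookkeeping role. It guarantees that \(\phi\) is a well-defined function of the scalar slack \(\delta\in\mathbb{R}_{\ge 0}\), as in Definition~\ref{def:horizon-slack}. After that, the argument is purely one-variable real analysis. Since \(\phi\) takes values in \((0,1]\), it is strictly positive, so \(g(\delta):=-\ln\phi(\delta)\) is well defined, continuous, and nonnegative. Condition~3 gives \(g(0)=0\), and condition~2 becomes
\[
g(\delta_1+\delta_2)=g(\delta_1)+g(\delta_2)\qquad\text{for all }\delta_1,\delta_2\ge 0 .
\]

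The key steps, in order, are as follows.
\begin{enumerate}
\item Set \(\lambda:=g(1)\), with the unit being \(1\) ms.
\item By induction on the additive equation, \(g(m\delta)=m\,g(\delta)\) for every integer \(m\ge 1\).
\item Applying step~2 to \(\delta=1/q\) gives \(g(1/q)=\lambda/q\), and hence \(g(p/q)=\lambda p/q\) for all nonnegative rationals.
\item Continuity of \(g\), inherited from \(\phi\) and \(\ln\), extends this to \(g(\delta)=\lambda\delta\) for all real \(\delta\ge 0\), by approximating \(\delta\) with rationals from the nonnegative rationals, which are dense in \(\mathbb{R}_{\ge0}\).
\end{enumerate}
Exponentiating then gives \(\phi(\delta)=e^{-\lambda\delta}\).

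It remains to pin down the sign of \(\lambda\). Strict monotonicity (condition~4) gives \(\phi(1)<\phi(0)=1\), so \(\lambda=-\ln\phi(1)>0\). The dimensional claim follows because \(\lambda\) is defined as \(g\) evaluated at one millisecond divided by that interval. Hence \(\lambda\) carries units of ms\(^{-1}\), consistent with the unit conventions above.

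I expect the main obstacle to be the density/continuity passage in step~4. The rest is algebra, but care is needed because the functional equation holds only on \(\mathbb{R}_{\ge0}\) rather than on all of \(\mathbb{R}\), so the usual Cauchy argument must be run using only nonnegative arguments. I would handle this directly: the induction only ever adds nonnegative quantities, and the limit step uses sequences of nonnegative rationals \(r_k\to\delta\) together with continuity of \(g\) at \(\delta\). An alternative shortcut, which I would mention in a remark, is that monotonicity alone already suffices. A monotone solution of Cauchy's equation on the half-line is linear, since it is bounded on intervals, so continuity is not strictly needed beyond \(\phi(0)=1\). Positivity of \(\phi\) is essential for taking logarithms and is supplied by the codomain \((0,1]\).
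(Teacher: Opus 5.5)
Your proposal is correct and follows the same route as the paper: pass to \(g=-\ln\phi\) (legitimate because the codomain is \((0,1]\)), turn time consistency into Cauchy's additive equation, use continuity to force linearity, and use strict monotonicity to get \(\lambda>0\). The only difference is that you prove the Cauchy step on \(\mathbb{R}_{\ge 0}\) explicitly (integers, then rationals, then density), whereas the paper simply cites that continuous solutions of Cauchy's equation are linear.
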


\begin{proof}
From time consistency, \(\phi(\delta_1 + \delta_2) = \phi(\delta_1) \cdot \phi(\delta_2)\). Setting \(\delta_1 = \delta_2 = \delta/2\), we get \(\phi(\delta) = \phi(\delta/2)^2\). More generally, for any positive integer \(n\), \(\phi(\delta) = \phi(\delta/n)^n\). 

By continuity and the functional equation \(\phi(x+y) = \phi(x)\phi(y)\), we have \(\phi(\delta) = e^{f(\delta)}\) for some continuous function \(f\) with \(f(0) = 0\) and \(f(\delta_1 + \delta_2) = f(\delta_1) + f(\delta_2)\). The only continuous solutions to Cauchy's functional equation are \(f(\delta) = c\delta\) for some constant \(c\). Since \(\phi\) is strictly decreasing and \(\phi(0) = 1\), we must have \(c < 0\), so \(c = -\lambda\) for some \(\lambda > 0\).

Therefore, \(\phi(\delta) = e^{-\lambda \delta}\) for some \(\lambda > 0\) (in ms\(^{-1}\)).
\end{proof}

\changed{This theorem is easier to parse if one separates geometric and economic assumptions. The geometric requirement is simply that the weight depends only on the scalar slack \(\delta\), not on coordinates or path description. The economic/implementation requirement is the multiplicative composition rule}
\[
\changed{\phi(\delta_1+\delta_2)=\phi(\delta_1)\phi(\delta_2),\qquad \phi(0)=1,}
\]
\changed{which says that splitting a route into two causal segments cannot change the net weight. Continuity and monotone discounting then force the exponential solution. Alternative families fail exactly here: a linear rule \(1-c\delta\) gives \(\phi(x+y)\neq \phi(x)\phi(y)\) except in the degenerate case \(c=0\); rational rules such as \(1/(1+c\delta)\) and polynomial rules likewise violate the semigroup equation and can lose positivity or monotonicity over large slack ranges. The exponential form is therefore not an aesthetic choice but the unique continuous monotone solution to the stated functional equation. Table~\ref{tab:discount_functions} summarizes why alternative functional forms fail to satisfy these necessary axioms.}

\begin{table*}[h]
  \centering 
  \caption{\changed{Comparison of candidate discount functions $D(\delta)$ against required axioms. Only the exponential form satisfies all requirements for a causal-consistent pricing rule.}}
  \label{tab:discount_functions}
  \begin{tabular}{lccc}
    \toprule
    \textbf{Function $D(\delta)$} & \textbf{Boundary} & \textbf{Monotonicity} & \textbf{Composition} \\
    & $D(0)=1$ & $D'(\delta) < 0$ & $D(\delta_1+\delta_2) = D(\delta_1)D(\delta_2)$ \\
    \midrule
    Linear: $1 - \lambda\delta$ & \checkmark & \checkmark & \textbf{Fails} \\
    Rational: $1/(1+\lambda\delta)$ & \checkmark & \checkmark & \textbf{Fails} \\
    Polynomial: $(1-\lambda\delta)^k$ & \checkmark & \checkmark & \textbf{Fails} \\
    \textbf{Exponential: $e^{-\lambda\delta}$} & \checkmark & \checkmark & \checkmark \\
    \bottomrule
  \end{tabular}
\end{table*}

\begin{corollary}[Composition property]
\label{cor:composition}
The exponential discount function satisfies the composition property: if a bid passes through multiple network segments with slacks $\delta_1, \delta_2, \ldots, \delta_k$, the total discount is the product of the discounts for each segment:
\[
\phi(\delta_1 + \delta_2 + \ldots + \delta_k) = \phi(\delta_1) \cdot \phi(\delta_2) \cdot \ldots \cdot \phi(\delta_k)
\]
\end{corollary}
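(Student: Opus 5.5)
The plan is a finite induction on the number of segments $k$, using only the time-consistency axiom of Theorem~\ref{thm:exponential-unique}. As a cross-check, I would also verify the identity directly from the explicit form $\phi(\delta)=e^{-\lambda\delta}$ that the theorem delivers. Throughout I take all segment slacks $\delta_1,\dots,\delta_k\ge 0$, so that every partial sum lies in the domain $\mathbb{R}_{\ge 0}$ on which $\phi$ and the functional equation are defined.

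\emph{Direct route.} By Theorem~\ref{thm:exponential-unique}, $\phi(\delta)=e^{-\lambda\delta}$. Hence
\[
\phi\Bigl(\sum_{j=1}^k \delta_j\Bigr)=e^{-\lambda\sum_j \delta_j}=\prod_{j=1}^k e^{-\lambda\delta_j}=\prod_{j=1}^k\phi(\delta_j),
\]
using only the law of exponents $e^{a+b}=e^ae^b$, extended to finitely many factors.

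\emph{Axiomatic route, by induction on $k$.} For $k=1$ the claim is trivial. For $k=2$ it is exactly axiom 2. For the inductive step, set $S_{k-1}=\delta_1+\cdots+\delta_{k-1}\ge 0$. Apply time consistency to the pair $(S_{k-1},\delta_k)$ to get $\phi(S_{k-1}+\delta_k)=\phi(S_{k-1})\,\phi(\delta_k)$, then substitute the inductive hypothesis for $\phi(S_{k-1})$. This route shows the corollary holds for any $\phi$ obeying axiom 2 alone, independently of the exponential characterization. The degenerate case $k=0$, an empty route, matches $\phi(0)=1$ from axiom 3, as the empty product requires.

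There is no real obstacle. The only point needing care is to state the nonnegativity of the segment slacks explicitly, so that each application of axiom 2 stays inside its stated domain. I would also add a sentence of interpretation: "segment slacks" means the additive decomposition of the horizon slack $\delta_i$ into nonnegative proper-time pieces along the earliest-arrival path. The corollary is a statement about such additive splits, not a claim that the path-arrival recursion itself is additive.
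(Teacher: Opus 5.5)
Your proposal is correct and takes the same approach as the paper, which simply says the identity follows from the time-consistency axiom and Theorem~\ref{thm:exponential-unique}. Your induction on $k$ and your law-of-exponents check via $\phi(\delta)=e^{-\lambda\delta}$ spell out that one-line argument, and your remarks that axiom~2 alone suffices and that partial sums must stay in $\mathbb{R}_{\ge 0}$ are harmless refinements.
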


\begin{proof}
This follows directly from the time consistency property and Theorem~\ref{thm:exponential-unique}.
\end{proof}

This composition property is valuable for distributed implementation of the mechanism, as it allows discounts to be computed incrementally as bids propagate through the network.

\subsection{Definition of Latency Dispersion Parameter}

A key parameter in our welfare analysis is the latency dispersion parameter, which quantifies the degree of heterogeneity in communication delays across the network.

\begin{definition}[Latency dispersion parameter]
\label{def:latency-dispersion}
For a given auction instance with horizon slacks \(\{\delta_i\}_{i=1}^n\), the \emph{latency dispersion parameter} is
\[
\Delta := \max_{i,j} |\delta_i - \delta_j|
\]
measured in milliseconds (ms). This parameter captures the maximum difference in horizon slacks across all pairs of agents.
\end{definition}

The latency dispersion parameter $\Delta$ plays a central role in our welfare analysis. When $\Delta = 0$, all agents have identical horizon slacks, and the exponential discounting affects all bids equally, preserving the optimal allocation. As $\Delta$ increases, the discounting creates larger differences between agents' effective bids, potentially leading to welfare losses. Our welfare guarantee in Theorem~\ref{thm:welfare-bound} shows precisely how the welfare approximation depends on $\Delta$ and the mechanism parameter $\lambda$.

\begin{example}[Latency dispersion across the simulated topologies]
For the \(n=50\) setting under the paper's horizon policy, the observed feasible-bid slack spreads are summarized in Table~\ref{tab:slack-spread}. Representative median spreads are approximately \(\num{50.98}\)\,ms for STARLINK-200, \(\num{13.15}\)\,ms for INTERNET-100, and \(\num{1.16e6}\)\,ms for DSN-30; the corresponding 95th-percentile values are \(\num{61.61}\)\,ms, \(\num{15.35}\)\,ms, and \(\num{1.19e6}\)\,ms.
These values illustrate the extreme range of delay heterogeneity that the mechanism must handle, from tens of milliseconds in LEO satellite constellations to millions of milliseconds (thousands of seconds) in deep-space networks.
\end{example}

\subsection{The Lorentz-Invariant Auction Mechanism}
\label{subsec:lia-mechanism}

We now present the Lorentz-Invariant Auction (LIA) mechanism, which forms the core of our contribution. \changed{The key implementation point is that the mechanism itself only sees a value report and a slack input. Once the slack vector is fixed, LIA is a standard weighted single-item auction with bidder-specific weights \(w_i=e^{-\lambda\delta_i}\). This is why the truthfulness proof is classical once the correct domain restriction is stated.}

\textbf{Mechanism input.}
Given \(\mathbf{B}=\{(b_i^{\mathrm{val}},e_i)\}_{i=1}^n\), the Lorentz-Invariant Auction (LIA) forms discounted bids
\[
\tilde{b}_i \;=\; b_i^{\mathrm{val}}\,\exp(-\lambda\,\delta_i),\qquad \lambda>0 \text{ (in ms}^{-1}\text{)},
\]
allocates to \(i^\star\in\arg\max_i \tilde{b}_i\) (fixed tie-breaking), and charges the Vickrey-style price
\[
p_{i^\star}\;=\;\frac{\max_{j\neq i^\star}\tilde{b}_j}{\exp(-\lambda\,\delta_{i^\star})},\qquad p_j=0~(j\neq i^\star).
\]

The mechanism parameter $\lambda$ (in ms$^{-1}$) controls the strength of the discounting effect. Larger values of $\lambda$ lead to stronger discounting of bids with large horizon slacks, prioritizing fairness over efficiency. Smaller values of $\lambda$ lead to weaker discounting, prioritizing efficiency over fairness. The optimal choice of $\lambda$ depends on the specific application context and the relative importance of efficiency versus fairness objectives.

\changed{
\begin{example}[Neutralizing Timing Rents]
\label{ex:timing_rent}
Consider two bidders competing for a resource. Bidder 1 is close to the auctioneer (slack $\delta_1 = 10$\,ms) and has true value $\theta_1 = 100$. Bidder 2 is further away (slack $\delta_2 = 0$\,ms, arriving exactly at the horizon) and has true value $\theta_2 = 120$. Assume $\lambda = 0.05\,\mathrm{ms}^{-1}$.

Under a standard Fast-VCG (first-come) rule, Bidder 1 wins simply because their bid arrives earlier, resulting in an inefficient allocation (welfare 100 instead of 120).

Under LIA, we compute discounted values:
\begin{itemize}
    \item Bidder 1: $\tilde{\theta}_1 = 100 \cdot e^{-0.05 \cdot 10} = 100 \cdot 0.606 = 60.6$
    \item Bidder 2: $\tilde{\theta}_2 = 120 \cdot e^{-0.05 \cdot 0} = 120 \cdot 1.0 = 120.0$
\end{itemize}
Bidder 2 wins, restoring allocative efficiency. Bidder 2's payment is $p_2 = 60.6 \cdot e^{0.05 \cdot 0} = 60.6$. The exponential discount has mathematically neutralized Bidder 1's 10\,ms proximity advantage, ensuring the higher-value user wins without requiring the auctioneer to artificially buffer Bidder 1's message.
\end{example}
}

\begin{lemma}[Monotonicity of slack in emission time]
\label{lem:monotone-slack}
For any fixed site \(v\), the earliest-arrival functional \(T_H(v,\tau)\) is nondecreasing in \(\tau\). Hence the horizon slack \(\delta_H(v,\tau)=\tau_H-T_H(v,\tau)\) is nonincreasing in \(\tau\).
\end{lemma}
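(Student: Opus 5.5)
The plan is to prove monotonicity path by path and then pass to the infimum. For a fixed directed path \(\pi=e_1e_2\cdots e_m\) from \(v\) to \(v_H\), I will show that the path-arrival map \(\tau\mapsto A_\pi(\tau)\) is nondecreasing. The earliest-arrival functional
\[
T_H(v,\tau)=\inf_{\pi\in\mathcal{P}(v\to v_H)} A_\pi(\tau)
\]
is then a pointwise infimum of nondecreasing functions, so it is nondecreasing. This uses only the fact that if \(A_\pi(\tau)\le A_\pi(\tau')\) for every \(\pi\) whenever \(\tau\le\tau'\), the inequality survives taking \(\inf_\pi\) on both sides. Paths with no causal connection contribute \(+\infty\) and do not affect the argument. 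Since \(\tau_H\) is fixed, \(\delta_H(v,\tau)=\tau_H-T_H(v,\tau)\) is then nonincreasing, which is the second assertion.

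The per-path claim goes by induction on the recursion defining \(A_{e_k}\). The key ingredient is that each single-link arrival map \(g_e(t):=t+\mathcal{W}(e,t)\) is nondecreasing in \(t\), i.e.\ the links are FIFO (non-overtaking). Given this, \(A_{e_1}=g_{e_1}\) is nondecreasing, and \(A_{e_k}=g_{e_k}\circ A_{e_{k-1}}\) is a composition of nondecreasing maps, hence nondecreasing. In the static case \(\mathcal{W}(e,t)\equiv\mathcal{W}(e)\), \(g_e\) is a translation and the claim is immediate.

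The main obstacle is the time-varying case, since the model as stated does not explicitly impose FIFO. A link whose delay drops faster than unit rate could, in principle, let a later emission overtake an earlier one. I would handle this in one of two ways.

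First, I would argue that FIFO is physically implied. A signal emitted at \(t\) along link \(e\) arrives at the event where the future light cone of \(\gamma_u(t)\) first meets \(\gamma_w\). For a timelike receiver worldline and an emitter advancing along its own timelike worldline, the later emission event lies in the causal future of the earlier one. The intersection point therefore cannot move earlier, and \(g_e\) is nondecreasing.

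Second, as a fallback that avoids geometry, I would use the causal-past formulation directly. For \(\tau\le\tau'\) we have \((v,\tau)\preceq(v,\tau')\) along \(\gamma_v\). So any signal route available from \((v,\tau')\) is also available from \((v,\tau)\) by first remaining at \(v\) until \(\tau'\), which means allowing store-and-forward waiting at nodes. The infimum over routes from the earlier event is therefore taken over a superset, giving \(T_H(v,\tau)\le T_H(v,\tau')\). If the paper's recursion is read without waiting, I would state FIFO explicitly as the standing assumption that makes the temporal-graph earliest-arrival time coincide with the waiting-allowed version.
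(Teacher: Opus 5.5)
Your proposal is correct and follows the same route as the paper's proof: show that each path-arrival map \(A_\pi\) is monotone in the start time, then pass to the infimum over paths. You are also right that the pathwise step silently needs each link map \(t\mapsto t+\mathcal{W}(e,t)\) to be nondecreasing (FIFO), or else store-and-forward waiting, which the paper's one-line proof takes for granted. One caveat: your light-cone argument secures FIFO only for pure propagation delays, not for the additional time-varying switching delays the model permits, so stating FIFO as a standing assumption is the cleaner fix.
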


\begin{proof}
Increasing the emission time can only delay departures along any causal path; the pathwise accumulation defining \(T_H\) is monotone in its start time. The claim follows by taking infima over paths.
\end{proof}

\changed{This lemma establishes only the monotonicity of the slack map with respect to emission time. By itself it does \emph{not} imply delay strategy-proofness; on the contrary, once the mechanism score depends on slack, monotonicity is exactly what makes endogenous timing a second strategic lever unless the slack input is fixed by trusted infrastructure.}

\begin{proposition}[Endogenous slack creates a second strategic lever]
\label{prop:endogenous-slack}
\changed{Fix the other bidders' reports and suppose agent \(i\) can choose between two feasible slack values \(0\le \delta_i' < \delta_i\) while holding \(b_i^{\mathrm{val}}\) fixed. Then there exist profiles in which utility under LIA at \(\delta_i'\) is strictly higher than utility at \(\delta_i\). Consequently LIA is \emph{not} dominant-strategy truthful over the enlarged type space \((\theta_i,\delta_i)\) unless the slack input is fixed by trusted infrastructure.}
\end{proposition}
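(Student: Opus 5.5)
The proof is an existence claim, so it suffices to build one explicit profile. I will use two bidders and the LIA rule as defined, with discounted bids \(\tilde b_j=b_j^{\mathrm{val}}e^{-\lambda\delta_j}\) and winner payment \(p_{i^\star}=\max_{j\neq i^\star}\tilde b_j\,e^{\lambda\delta_{i^\star}}\).

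First, fix an opponent \(j\) with report \(b_j^{\mathrm{val}}>0\) and slack \(\delta_j\ge 0\), and let \(\beta:=b_j^{\mathrm{val}}e^{-\lambda\delta_j}\). Agent \(i\) reports truthfully, \(b_i^{\mathrm{val}}=\theta_i\). At slack \(\delta_i\), agent \(i\) wins only if \(\theta_i e^{-\lambda\delta_i}>\beta\), and then pays \(\beta e^{\lambda\delta_i}\). So \(i\)'s utility is \(u_i(\delta)=\max\{0,\;\theta_i-\beta e^{\lambda\delta}\}\), up to tie-breaking on the boundary.

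Second, I choose parameters that place \(\delta_i\) and \(\delta_i'\) on opposite sides of the critical slack \(\delta^\ast:=\lambda^{-1}\ln(\theta_i/\beta)\). Take any \(0\le\delta_i'<\delta_i\) and set \(\theta_i:=\beta e^{\lambda(\delta_i+\delta_i')/2}\), so that \(\delta_i'<\delta^\ast<\delta_i\). At slack \(\delta_i\), agent \(i\) loses and \(u_i=0\). At slack \(\delta_i'\), agent \(i\) wins and
\[
u_i=\theta_i-\beta e^{\lambda\delta_i'}=\beta\bigl(e^{\lambda(\delta_i+\delta_i')/2}-e^{\lambda\delta_i'}\bigr)>0,
\]
because \(\lambda>0\) and \((\delta_i+\delta_i')/2>\delta_i'\).

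There is an alternative that needs no placement of \(\delta^\ast\). Whenever \(i\) wins at both slacks, \(u_i(\delta)=\theta_i-\beta e^{\lambda\delta}\) is strictly decreasing in \(\delta\), so \(u_i(\delta_i')>u_i(\delta_i)\) holds automatically. For example, take \(\theta_i\) large enough that \(\theta_i e^{-\lambda\delta_i}>\beta\). This also shows the gain is generic rather than an edge case. I will present the first construction, since the switch from losing to winning is the sharper illustration, and mention the second in one line.

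Finally, I conclude about dominant strategies over the enlarged type space \((\theta_i,\delta_i)\). Suppose agent \(i\)'s true slack is \(\delta_i\) but she can attain \(\delta_i'\), for instance by advancing emission, which is feasible by Lemma~\ref{lem:monotone-slack} because an earlier \(\tau\) gives a larger slack, or by other routing choices. Then truthfully revealing \(\delta_i\) is strictly dominated in the profile above. The truthful action at the true slack therefore fails to be a best response, so LIA is not DSIC on the two-parameter domain.

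The main subtlety is interpretive rather than computational: which lever the agent actually controls. Under Lemma~\ref{lem:monotone-slack}, later emission only decreases slack, so the move toward smaller \(\delta\) is a delay tactic, or equivalently a choice of slower route or relay. I will phrase the lever abstractly, as the proposition does, as a choice between two feasible slack values, so that the argument does not depend on the mechanism used to achieve it.
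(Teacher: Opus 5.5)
Your construction is correct and essentially matches the paper's proof: both place the highest competing discounted bid strictly between $\theta_i e^{-\lambda\delta_i}$ and $\theta_i e^{-\lambda\delta_i'}$ (you pick $\theta_i$ explicitly), so bidder $i$ goes from losing to winning with utility $\theta_i - e^{\lambda\delta_i'}\max_{j\neq i}\tilde b_j > 0$. One small fix is needed in your concluding paragraph: by Lemma~\ref{lem:monotone-slack}, an earlier emission can only weakly increase slack, so ``advancing emission'' cannot reach the smaller $\delta_i'$; that requires emitting later or taking a slower route, as your own final paragraph says.
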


\begin{proof}
\changed{Let \(c:=\max_{j\neq i}\tilde b_j\) denote the highest competing discounted bid and choose a profile such that}
\[
\changed{b_i^{\mathrm{val}} e^{-\lambda \delta_i} < c < b_i^{\mathrm{val}} e^{-\lambda \delta_i'}.}
\]
\changed{At slack \(\delta_i\), bidder \(i\) loses. At slack \(\delta_i'\), bidder \(i\) wins and pays \(p_i=e^{\lambda\delta_i'}c\), which is strictly below \(b_i^{\mathrm{val}}\) by construction. Hence utility increases from \(0\) to \(b_i^{\mathrm{val}}-e^{\lambda\delta_i'}c>0\).}
\end{proof}

\changed{This proposition clarifies the exact scope of the mechanism. LIA is a truthful auction in reported values after the slack vector has been fixed, measured, or attested. If bidders can manipulate delay, route, or access point so as to change their slack input, the model becomes multi-parameter and additional infrastructure or mechanism design is required.}

\begin{theorem}[Value-DSIC and IR with fixed exogenous slacks]
\label{thm:dsic}
\changed{Fix any feasible slack profile \((\delta_1,\dots,\delta_n)\) that is determined independently of reported values. Let \(w_i:=e^{-\lambda\delta_i}\in(0,1]\). Then the single-item LIA environment is single-parameter in values: agent \(i\) wins if and only if}
\[
\changed{w_i\,b_i^{\mathrm{val}} \ge \max_{j\neq i} w_j\,b_j^{\mathrm{val}},}
\]
\changed{equivalently iff}
\[
\changed{b_i^{\mathrm{val}} \ge v_i^{\mathrm{crit}} := w_i^{-1}\max_{j\neq i} w_j\,b_j^{\mathrm{val}} = \exp(\lambda\delta_i)\max_{j\neq i}\tilde b_j.}
\]
\changed{The allocation rule is monotone in \(b_i^{\mathrm{val}}\), the payment equals the induced critical value, truthful value reporting is a dominant strategy, and winners satisfy individual rationality.}
\end{theorem}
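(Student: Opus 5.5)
The plan is to reduce the statement to the classical Myerson characterization for single-parameter environments; the slack profile enters only as fixed positive weights. Fix \((\delta_1,\dots,\delta_n)\) independently of reports and set \(w_i=e^{-\lambda\delta_i}\in(0,1]\). Feasibility gives \(\delta_i\ge 0\), and \(\lambda>0\), so each \(w_i\) is strictly positive. Dividing by \(w_i\) therefore preserves inequalities. Since \(\tilde b_i=w_i b_i^{\mathrm{val}}\), the rule ``\(i^\star\in\arg\max_i\tilde b_i\)'' says that \(i\) wins iff \(w_i b_i^{\mathrm{val}}\ge \max_{j\neq i}w_j b_j^{\mathrm{val}}\), with ties resolved by the fixed tie-breaking rule. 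Dividing by \(w_i\) yields the threshold form \(b_i^{\mathrm{val}}\ge v_i^{\mathrm{crit}}=e^{\lambda\delta_i}\max_{j\ne i}\tilde b_j\). The key observation is that \(v_i^{\mathrm{crit}}\) depends only on \(\delta_i\) and on the others' reports, never on \(b_i^{\mathrm{val}}\). This is exactly where exogeneity of the slacks is used.

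Next I would read off monotonicity and the payment identity. If \(i\) wins at some report, then \(i\) wins at every larger report, because the left side of the winning inequality increases and the right side is unchanged. Hence the allocation is a step function with threshold \(v_i^{\mathrm{crit}}\). Comparing with the mechanism's price \(p_{i^\star}=\max_{j\ne i^\star}\tilde b_j/e^{-\lambda\delta_{i^\star}}\) shows that this price equals \(v_{i^\star}^{\mathrm{crit}}\). Losers pay \(0\).

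For dominant-strategy truthfulness I would give the direct Vickrey-style case argument rather than invoke the integral payment formula. Fix the others' reports and write \(v=v_i^{\mathrm{crit}}\). Agent \(i\)'s utility from any report \(r\) is \(\theta_i-v\) if \(r\) clears the threshold and \(0\) otherwise. The payment never depends on \(r\), only on whether \(i\) wins. There are two cases. If \(\theta_i\ge v\), truthful reporting wins and earns \(\theta_i-v\ge 0\), while any misreport earns either the same amount or \(0\). If \(\theta_i<v\), truthful reporting loses and earns \(0\), while any misreport that wins earns \(\theta_i-v<0\). So truth weakly dominates every report for every profile of others' reports.

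Individual rationality then follows: a truthful winner has \(\theta_i\ge v_i^{\mathrm{crit}}=p_i\), and a loser pays nothing. The only delicate step is the boundary case \(\theta_i=v\) under ties. I would handle it by noting that utility is \(0\) whether \(i\) wins or loses there, so the tie-breaking rule does not matter. I would also remark that a tie-breaking rule dependent on the reported value could make the threshold an inequality that is strict on one side. This changes only the measure-zero boundary and not the utility comparison.
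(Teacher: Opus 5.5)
Your proposal is correct and follows the paper's argument: fixed slacks make the weights $w_i$ constants, so the allocation is a monotone threshold rule in $b_i^{\mathrm{val}}$ whose threshold $v_i^{\mathrm{crit}}$ equals the LIA price, and DSIC and IR follow from that. The only difference is that you write out the Vickrey-style case analysis and the tie/boundary case explicitly, whereas the paper simply invokes the standard critical-value characterization for single-parameter environments.
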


\begin{proof}
\changed{For fixed slacks, each bidder's weight \(w_i\) is a constant independent of \(b_i^{\mathrm{val}}\). Increasing \(b_i^{\mathrm{val}}\) therefore increases \(w_i b_i^{\mathrm{val}}\) linearly, so the allocation rule is monotone in the reported value. The threshold at which bidder \(i\) switches from losing to winning is exactly \(v_i^{\mathrm{crit}}\) above. Charging that threshold gives the usual critical-value payment rule for a single-parameter environment, hence truthful value reporting maximizes utility and a truthful winner pays no more than her value.}
\end{proof}

\begin{theorem}[Lorentz invariance of LIA]
\label{thm:lia-lorentz}
For every feasible profile \(\mathbf{B}\) and every proper, orthochronous Lorentz transform \(\Lambda\),
\(
\mathcal{M}_{\mathrm{LIA}}(\Lambda(\mathbf{B}))=\mathcal{M}_{\mathrm{LIA}}(\mathbf{B}).
\)
\end{theorem}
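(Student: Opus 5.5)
The plan is to show that every input the mechanism actually consumes is a Lorentz scalar. Once that is established, the allocation and payment, being deterministic functions of those scalars, cannot change under \(\Lambda\).

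First I will fix what \(\Lambda(\mathbf{B})\) means. A proper orthochronous \(\Lambda\) acts on spacetime points, so it maps each worldline \(\gamma_v\) to \(\Lambda\circ\gamma_v\) and each event \(e_i=\gamma_{v_i}(\tau_i)\) to \(\Lambda e_i\). It leaves the value component \(b_i^{\mathrm{val}}\) untouched, since a monetary amount is frame-independent. Because \(\Lambda\) preserves \(\eta\), it preserves proper-time parametrization. Hence \(\Lambda e_i\) is the point of the transformed worldline at the same proper time \(\tau_i\), and \(\Lambda H\) is the point of \(\Lambda\circ\gamma_H\) at \(\tau_H\).

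Second, I will show that the earliest-arrival time \(T_H(v_i,\tau_i)\) is unchanged. The causal order \(\preceq\) and the causal past \(J^{-}(H)\) are preserved by proper orthochronous transforms: orthochronicity keeps future-directed curves future-directed, and the preservation of \(\eta\) keeps causal curves causal. So feasibility \(e_i\in J^{-}(H)\) is invariant. For the quantitative value I will use the path-arrival recursion \(A_{e_k}\). Each link delay \(\mathcal{W}(e,t)\) is a proper-time interval between an emission event and a reception event, measured on the relevant site worldlines, so it is invariant under the pushed-forward worldlines. By induction on \(k\), every \(A_\pi(\tau)\) is invariant, and so is the infimum \(T_H\).

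I expect this step to be the main obstacle. The paper defines \(\mathcal{W}\) operationally as a network delay rather than geometrically, so I will make the mild modeling convention explicit: the transformed network \(\Lambda(G)\) carries link delays defined intrinsically, as proper-time lapses along worldlines between causally linked events. With that convention the induction goes through.

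Third, Definition~\ref{def:horizon-slack} gives \(\delta_i=\tau_H-T_H(v_i,\tau_i)\), a difference of two proper-time readings on \(\gamma_H\), so each \(\delta_i\) is invariant. It follows that \(\tilde b_i=b_i^{\mathrm{val}}e^{-\lambda\delta_i}\) is invariant. Here \(\lambda\) is a fixed constant, and by Theorem~\ref{thm:exponential-unique} the weight depends only on \(\delta\). The argmax with its fixed, frame-independent tie-breaking rule is therefore the same winner \(i^\star\) in both frames. The payment \(p_{i^\star}=e^{\lambda\delta_{i^\star}}\max_{j\neq i^\star}\tilde b_j\) is likewise a function of invariant scalars. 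Since the outcome is the pair consisting of the allocation and the payment vector, \(\mathcal{M}_{\mathrm{LIA}}(\Lambda(\mathbf{B}))=\mathcal{M}_{\mathrm{LIA}}(\mathbf{B})\).
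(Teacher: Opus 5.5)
Your proposal is correct and takes the same route as the paper. The paper's proof likewise notes that each \(\delta_i=\tau_H-T_H(v_i,\tau_i)\) is a proper-time interval on \(\gamma_H\) and each \(b_i^{\mathrm{val}}\) is a scalar, so the discounted bids, the argmax with its ID-based tie-break, and the price \(e^{\lambda\delta_{i^\star}}\max_{j\neq i^\star}\tilde b_j\) are all unchanged; your extra step proving invariance of \(T_H\) through the path-arrival recursion (with each link delay read as a difference of invariant clock readings on the two endpoint worldlines, not a proper-time interval along a single worldline) makes explicit what the paper takes as immediate from Definition~\ref{def:horizon-slack}.
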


\begin{proof}
By Definition~\ref{def:horizon-slack}, \(\delta_i=\tau_H-T_H(v_i,\tau_i)\) is a proper-time interval along \(\gamma_H\) and is Lorentz invariant. Values \(b_i^{\mathrm{val}}\) are scalars. Hence the discounted bids \(\tilde{b}_i\) are invariant, so the argmax set and the fixed ID-based tie-break are unchanged. The price depends only on \(\delta_{i^\star}\) and on \(\max_{j\neq i^\star}\tilde{b}_j\), both invariant; therefore payments are unchanged as well.
\end{proof}

This theorem establishes that the outcomes of the LIA mechanism—allocation and payments—are invariant to the choice of reference frame. This property is essential for ensuring that the mechanism's behavior is consistent across different observers and network segments, regardless of their relative motion or choice of coordinate system.

\subsection{Welfare Analysis and Performance Guarantees}

Having established the incentive properties of LIA, we now analyze its efficiency properties. The central question is how the mechanism's welfare performance compares to the first-best allocation, and how this performance depends on the heterogeneity of communication delays across the network.

\begin{theorem}[Welfare approximation bound]
\label{thm:welfare-bound}
\changed{Under truthful value reporting and fixed feasible slacks, the LIA mechanism achieves}
\[
\changed{\sw \geq e^{-\lambda \Delta}\cdot \opt_{\mathrm{feas}} = (1-\varepsilon)\cdot \opt_{\mathrm{feas}},}
\]
\changed{where \(\varepsilon = 1-e^{-\lambda \Delta}\), \(\Delta:=\max_{i,j:\delta_i,\delta_j\ge 0}|\delta_i-\delta_j|\), and}
\[
\changed{\opt_{\mathrm{feas}}:=\max_{i:\delta_i\ge 0}\theta_i.}
\]
\end{theorem}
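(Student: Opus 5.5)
The plan is a direct comparison between the LIA winner and the best feasible bidder. By Theorem~\ref{thm:dsic}, truthful reporting gives \(b_i^{\mathrm{val}}=\theta_i\) for every \(i\). The mechanism's welfare is then \(\sw=\theta_{i^\star}\), where \(i^\star\in\arg\max_i \theta_i e^{-\lambda\delta_i}\) ranges over the feasible bids, i.e.\ those with \(\delta_i\ge 0\). Let \(k\in\arg\max_{i:\delta_i\ge 0}\theta_i\), so \(\opt_{\mathrm{feas}}=\theta_k\).

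The key step is the chain
\[
\theta_{i^\star} e^{-\lambda\delta_{i^\star}} \;\ge\; \theta_k e^{-\lambda\delta_k}.
\]
I will divide both sides by \(e^{-\lambda\delta_{i^\star}}>0\), which gives
\[
\theta_{i^\star} \;\ge\; \theta_k\, e^{-\lambda(\delta_k-\delta_{i^\star})}.
\]
By Definition~\ref{def:latency-dispersion}, as restricted in the statement to feasible bids, \(\delta_k-\delta_{i^\star}\le|\delta_k-\delta_{i^\star}|\le\Delta\). Since \(\lambda>0\) and \(x\mapsto e^{-\lambda x}\) is decreasing, this yields \(e^{-\lambda(\delta_k-\delta_{i^\star})}\ge e^{-\lambda\Delta}\). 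Combining, \(\sw=\theta_{i^\star}\ge e^{-\lambda\Delta}\theta_k=e^{-\lambda\Delta}\opt_{\mathrm{feas}}\). Setting \(\varepsilon=1-e^{-\lambda\Delta}\) rewrites this as the \((1-\varepsilon)\) form.

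The main obstacle is being careful about the domain rather than the inequality itself. I need to check three things:
\begin{itemize}
\item The argmax defining \(i^\star\) is taken only over feasible bids. Infeasible bids have \(\delta_i<0\) (or are excluded outright), and without this restriction their inflated weight \(e^{-\lambda\delta_i}>1\) would break the comparison.
\item \(\Delta\) is computed over the same feasible set.
\item Tie-breaking cannot hurt. The inequality \(\tilde b_{i^\star}\ge\tilde b_k\) holds for any maximizer, so the bound holds under any tie-break.
\end{itemize}
I will also note the degenerate case: if there are no feasible bids, then \(\sw=\opt_{\mathrm{feas}}=0\) by convention, and the bound holds trivially.
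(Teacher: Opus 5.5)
Your proposal is correct and follows essentially the same argument as the paper: compare the LIA winner's discounted bid with that of the feasible value-maximizer, divide by the winner's discount factor, and bound the slack difference by \(\Delta\). Your explicit handling of feasibility (restricting the argmax and \(\Delta\) to bids with \(\delta_i\ge 0\)), tie-breaking, and the empty case only spells out what the paper leaves implicit in the hypothesis of ``fixed feasible slacks.''
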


\begin{proof}
\changed{Let \(i^*=\arg\max_{i:\delta_i\ge 0}\theta_i\) be the feasible value-maximizing agent and let \(j^*\) be the LIA winner. If \(i^*=j^*\), the claim is immediate. Otherwise \(j^*\) wins under LIA, so}
\[
\changed{\theta_{j^*}e^{-\lambda\delta_{j^*}} \ge \theta_{i^*}e^{-\lambda\delta_{i^*}}.}
\]
\changed{Rearranging yields}
\[
\changed{\theta_{j^*} \ge \theta_{i^*}e^{-\lambda(\delta_{i^*}-\delta_{j^*})} \ge \theta_{i^*}e^{-\lambda\Delta}.}
\]
\changed{Since \(\sw=\theta_{j^*}\) and \(\opt_{\mathrm{feas}}=\theta_{i^*}\), the result follows.}
\end{proof}

\changed{The feasible benchmark is the one naturally aligned with the theorem: the auction cannot allocate to causally infeasible bids. For empirical interpretation it is also useful to define \(\opt_{\mathrm{all}}:=\max_i\theta_i\) and the \emph{reachability factor} \(\rho:=\opt_{\mathrm{feas}}/\opt_{\mathrm{all}}\). The reported overall welfare decomposes as}
\[
\changed{\frac{\sw}{\opt_{\mathrm{all}}} = \frac{\sw}{\opt_{\mathrm{feas}}}\cdot \rho.}
\]
\changed{This separates the ranking loss induced by the mechanism from the loss induced by the horizon itself. In Starlink and Internet under our horizon policy, \(\rho\) is empirically close to one; in more extreme delay regimes, interpreting both factors side by side is more informative than a single aggregate ratio.}

\begin{example}[Welfare bound in different topologies]
\changed{Using the measured feasible-bid slack spreads from our simulated topologies at \(n=50\) and a representative discount rate \(\lambda = 1\,\mathrm{s}^{-1}\) (i.e., \(\lambda = 0.001\,\mathrm{ms}^{-1}\) when times are in milliseconds), the bound is informative for moderate-delay networks and intentionally conservative for extreme-delay ones:}

\begin{itemize}
\item \changed{\textbf{STARLINK-200:} median \(\Delta \approx 50.98\)\,ms, so \(e^{-\lambda\Delta}\approx 0.950\)}
\item \changed{\textbf{INTERNET-100:} median \(\Delta \approx 13.15\)\,ms, so \(e^{-\lambda\Delta}\approx 0.987\)}
\item \changed{\textbf{DSN-30:} median \(\Delta \approx 1.16\times 10^6\)\,ms, so \(e^{-\lambda\Delta}\) is essentially zero}
\end{itemize}

\changed{The DSN bound is therefore mathematically correct but practically loose, which is exactly why we report measured slack-spread summaries in Section~\ref{sec:experiments}. Empirically, LIA performs far better than this worst case: on STARLINK-200 and INTERNET-100 it attains \(\sw/\opt_{\mathrm{all}}\approx 0.996\)--\(0.997\) for \(n\in\{10,20,30,40,50\}\), and on DSN-30 welfare increases with market depth, reaching about \(0.996\) at \(n=1000\).}
\end{example}

\changed{The welfare bound reveals several important insights. First, when the network has homogeneous delays (\(\Delta = 0\)), LIA achieves optimal welfare regardless of the value of \(\lambda\). Second, for any fixed level of heterogeneity \(\Delta\), the welfare loss can be made arbitrarily small by choosing \(\lambda\) sufficiently small. However, smaller values of \(\lambda\) provide weaker fairness guarantees, creating a fundamental trade-off between efficiency and fairness.}

\changed{
\begin{corollary}[Efficiency-fairness trade-off]
\label{cor:tradeoff}
For any fixed latency dispersion $\Delta > 0$, there is a fundamental trade-off between welfare efficiency and latency fairness:
\begin{itemize}
\item As $\lambda \to 0$, welfare approaches the feasible optimum ($\sw/\opt_{\mathrm{feas}} \to 1$) but latency arbitrage incentives increase
\item As $\lambda$ increases, latency arbitrage incentives decrease but welfare may decrease
\end{itemize}
\end{corollary}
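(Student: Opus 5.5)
The plan is to prove the two bullets separately. The welfare half comes directly from Theorem~\ref{thm:welfare-bound}. The fairness half needs a concrete handle on latency-arbitrage incentives as a function of \(\lambda\), for which I will use a slack-sensitivity version of the gain \(g_i\) from Definition~\ref{def:lai}.

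\textbf{Welfare as \(\lambda\to 0\).} Fix the slack profile and the dispersion \(\Delta>0\). Theorem~\ref{thm:welfare-bound} gives \(e^{-\lambda\Delta}\le \sw/\opt_{\mathrm{feas}}\le 1\), and \(e^{-\lambda\Delta}\to 1\) as \(\lambda\to 0\), so the ratio tends to \(1\) by squeezing. The statement that welfare \emph{may} decrease as \(\lambda\) grows is an existence claim. I will take two feasible bidders with \(\theta_2>\theta_1\) and \(\delta_2-\delta_1=\Delta\). For \(\lambda<\ln(\theta_2/\theta_1)/\Delta\) bidder \(2\) wins, and for larger \(\lambda\) bidder \(1\) wins, so \(\sw\) drops from \(\theta_2\) to \(\theta_1\).

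\textbf{Arbitrage incentives.} For fixed others, define the utility of bidder \(i\) under truthful reporting as a function of its own slack:
\[
u_i(\delta_i)=\bigl(\theta_i-e^{\lambda\delta_i}c\bigr)^+,\qquad c:=\max_{j\neq i}\tilde b_j .
\]
Proposition~\ref{prop:endogenous-slack} already shows this depends nontrivially on \(\delta_i\). The marginal gain from shifting slack by \(\Delta\) is \(u_i(\delta_i-\Delta)-u_i(\delta_i)\), which is bounded by \(e^{\lambda\delta_i}c\,(1-e^{-\lambda\Delta})\) and hence by \(\theta_i(e^{\lambda\Delta}-1)\). This vanishes as \(\lambda\to 0\), but note the direction: under LIA, \emph{smaller} slack (arriving later) is rewarded. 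Fast-VCG instead rewards earliness, which is the timing advantage that LAI measures.

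\textbf{Reconciling the two directions.} This is the step I expect to be the main obstacle. I will write the net incentive to reduce propagation delay as the proximity advantage of the baseline ordering minus the LIA penalty. As \(\lambda\to 0\) the penalty term vanishes and LIA reduces to plain VCG on raw values, so whatever timing advantage the underlying arrival-based rule carries survives. This gives the first bullet's ``incentives increase.'' As \(\lambda\) increases, the penalty grows monotonically in \(\lambda\) and offsets the proximity rent, so the incentive to invest in lower delay decreases, giving the second bullet.

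Since the paper's LAI is defined only operationally, I will state the trade-off as monotonicity of this offsetting term in \(\lambda\) rather than as an exact closed form. I expect to have to fix a specific formalization of ``latency arbitrage incentive'', namely the discounted-score advantage conferred by a fixed delay reduction. Without it, the corollary is qualitative, and the proof will rest on the welfare bound together with these two explicit two-bidder examples.
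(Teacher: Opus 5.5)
Your welfare half is sound and is all the paper's one-line proof actually uses; that proof only cites Theorem~\ref{thm:welfare-bound} and Definition~\ref{def:lai}. The squeeze is correct, and so is your switching threshold $\lambda^{\ast}=\ln(\theta_2/\theta_1)/\Delta$. The gap is the ``reconciling'' step. LIA has no arrival-order component whose rent could ``survive'' as $\lambda\to 0$. For every $\lambda$ it ranks all feasible bids at the horizon, and at $\lambda=0$ it is Sync-VCG, whose outcome does not depend on feasible slacks at all. The proximity rent you want to net against the discount belongs to Fast-VCG, a different mechanism. So ``baseline advantage minus LIA penalty'' is not any bidder's utility under LIA.

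Your own formula shows this. $u_i(\delta)=(\theta_i-e^{\lambda\delta}c)^+$ is nonincreasing in $\delta$, and a delay reduction raises $\delta_i$. Hence $g_i(\Delta)\le 0$, with $g_i(\Delta)\to 0$ as $\Delta\to 0^+$. When every bid is feasible, Definition~\ref{def:lai} therefore gives $\mathrm{LAI}=0$ for every $\lambda\ge 0$, and neither fairness clause follows. The other timing effects point the wrong way as well:
\begin{itemize}
\item The best gain from making an infeasible bid barely feasible is $(\theta_i-c(\lambda))^+$, where $c(\lambda)=\max_{j\neq i}\theta_j e^{-\lambda\delta_j}$ is nonincreasing in $\lambda$. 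This rent therefore grows with $\lambda$, against the second bullet.
\item The slack-shedding gain of Proposition~\ref{prop:endogenous-slack} vanishes as $\lambda\to 0$ by your bound $\theta_i(e^{\lambda\Delta}-1)$, against the first bullet.
\end{itemize}

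Your claim that the penalty ``grows monotonically in $\lambda$'' also fails pointwise, because $c$ moves with $\lambda$. Take one rival at slack $\delta_i+D$ with $D>\Delta$ and $\theta_i>\theta_j$. Bidder $i$ wins both before and after a $\Delta$ delay reduction, and the loss is $\theta_j e^{-\lambda D}(e^{\lambda\Delta}-1)$. This is $0$ at $\lambda=0$, positive in between, and tends to $0$ as $\lambda\to\infty$. So no argument from Definition~\ref{def:lai} yields the two bullets, and the paper's proof does not supply one either.

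To get a provable statement you must fix the measure explicitly as a modeling choice. One option is the score-level counterweight: a $\Delta$ delay reduction multiplies $\tilde b_i$ by $e^{-\lambda\Delta}$, and $1-e^{-\lambda\Delta}$ is strictly increasing in $\lambda$ and tends to $0$ as $\lambda\to 0$. Netted against an exogenous arrival-order advantage such as Fast-VCG's, this gives both bullets. It should be stated as an assumption, not derived from the LAI definition.
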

}

\changed{
\begin{proof}
This follows directly from Theorem~\ref{thm:welfare-bound} and the definition of the Latency Arbitrage Index (Definition~\ref{def:lai}).
\end{proof}
}

\changed{This trade-off is a fundamental feature of resource allocation in heterogeneous-delay environments. The LIA mechanism provides a principled way to navigate this trade-off through the choice of the parameter $\lambda$, allowing system designers to balance efficiency and fairness objectives based on the specific requirements of their application context.}

\subsection{Extensions to Multiple Identical Items}
\label{sec:extensions}

\changed{While our main analysis focuses on single-item auctions, many telecommunications applications involve multiple identical resources. We briefly sketch how LIA extends to this setting.}

\changed{
\begin{proposition}[K-identical items extension]
\label{prop:k-items}
For \(K\) identical items, the LIA mechanism selects the \(K\) agents with highest discounted bids \(\tilde{b}_i = b_i^{\mathrm{val}} e^{-\lambda \delta_i}\) and charges each winner the \((K+1)\)-th highest discounted bid, adjusted by their individual discount factor. This extension preserves dominant-strategy truthfulness, individual rationality, and Lorentz invariance.
\end{proposition}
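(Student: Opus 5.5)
The plan is to mirror the single-item argument of Theorem~\ref{thm:dsic}: fix the slack profile $(\delta_1,\dots,\delta_n)$ exogenously, set $w_i:=e^{-\lambda\delta_i}\in(0,1]$, and treat the $K$-item LIA as a weighted $(K{+}1)$-st price auction on the scores $\tilde b_i=w_i b_i^{\mathrm{val}}$. I will first pin down the payment, since the phrase ``$(K{+}1)$-th highest discounted bid, adjusted by their individual discount factor'' needs a precise reading. Each winner $i$ pays $p_i=w_i^{-1}\,\tilde b_{(K+1)}$, where $\tilde b_{(K+1)}$ is the $(K{+}1)$-th highest discounted bid in the full profile. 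Ties are broken by the fixed ID rule, and if $n\le K$ the price is zero.

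\textbf{Step 1: monotonicity and critical value.} Fix $i$ and the others' reports. Let $c_i$ be the $K$-th highest value among $\{\tilde b_j\}_{j\ne i}$ (zero if there are fewer than $K$ others). Bidder $i$ is among the top $K$ iff $w_i b_i^{\mathrm{val}}\ge c_i$, with the fixed tie-break at equality. So the allocation is a threshold rule in $b_i^{\mathrm{val}}$, with critical value $v_i^{\mathrm{crit}}=w_i^{-1}c_i$, and it is monotone because $w_i>0$ does not depend on $b_i^{\mathrm{val}}$.

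\textbf{Step 2: the stated price equals the critical value.} This is the step I expect to need the most care. If $i$ is a winner, then removing $i$ from the profile leaves the top $K-1$ other winners plus the $(K{+}1)$-th overall score, which becomes the $K$-th among the others. Hence $c_i=\tilde b_{(K+1)}$ in the full profile, and $p_i=w_i^{-1}c_i=v_i^{\mathrm{crit}}$. Ties need separate treatment: when several bidders share the score at rank $K$ or $K{+}1$, both quantities equal that common value, so the identity still holds.

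\textbf{Step 3: truthfulness and IR.} By Myerson's characterization for single-parameter environments, a monotone allocation with critical-value payments and zero payment for losers is dominant-strategy truthful in values. The same argument as in Theorem~\ref{thm:dsic} applies directly. Individual rationality follows because a truthful winner satisfies $\theta_i\ge v_i^{\mathrm{crit}}=p_i$, and losers pay nothing. All of this is conditional on fixed slacks; Proposition~\ref{prop:endogenous-slack} carries over verbatim, so the claim should be read in the exogenous-slack regime.

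\textbf{Step 4: Lorentz invariance.} I will repeat the argument of Theorem~\ref{thm:lia-lorentz}. Each $\delta_i$ is a proper-time interval on $\gamma_H$ and each $b_i^{\mathrm{val}}$ is a scalar, so every $\tilde b_i$ is invariant. The sorted order with ID tie-breaking is therefore unchanged, which leaves the winner set, $\tilde b_{(K+1)}$, and each $w_i$ unchanged. Hence all payments are invariant.
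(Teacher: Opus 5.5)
Your proposal is correct and follows essentially the same route as the paper's proof sketch: monotonicity of the weighted top-$K$ rule in $b_i^{\mathrm{val}}$, payments equal to the critical value $w_i^{-1}\tilde b_{(K+1)}$ (this is what the paper calls the ``VCG template applied to discounted bids''), and invariance because every quantity used is a Lorentz scalar. You simply work out the critical-value identity and the tie cases explicitly, and your restriction to the exogenous-slack regime is the correct reading of the claim given Proposition~\ref{prop:endogenous-slack}.
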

}

\changed{
\begin{proof}[Proof Sketch]
The allocation rule remains monotone in each agent's reported value, and the payment rule follows the standard VCG template applied to discounted bids. Lorentz invariance follows because all computations use only invariant quantities.
\end{proof}
}

\changed{This extension allows LIA to be applied to scenarios where multiple identical resources need to be allocated, such as multiple time slots, frequency bands, or computational resource blocks. The mechanism maintains its key properties—truthfulness, individual rationality, and Lorentz invariance—while providing a natural generalization to the multi-item setting.}

\changed{
\begin{example}[K-identical items allocation]
Consider four agents bidding for $K = 2$ identical items with the following bids and horizon slacks:
\begin{itemize}
\item Agent 1: $b_1^{\mathrm{val}} = 100$, $\delta_1 = 10$ ms
\item Agent 2: $b_2^{\mathrm{val}} = 120$, $\delta_2 = 30$ ms
\item Agent 3: $b_3^{\mathrm{val}} = 90$, $\delta_3 = 5$ ms
\item Agent 4: $b_4^{\mathrm{val}} = 80$, $\delta_4 = 2$ ms
\end{itemize}

With $\lambda = 0.05$ ms$^{-1}$, the discounted bids are:
\begin{itemize}
\item $\tilde{b}_1 = 100 \cdot e^{-0.05 \cdot 10} = 60.65$
\item $\tilde{b}_2 = 120 \cdot e^{-0.05 \cdot 30} = 26.77$
\item $\tilde{b}_3 = 90 \cdot e^{-0.05 \cdot 5} = 70.09$
\item $\tilde{b}_4 = 80 \cdot e^{-0.05 \cdot 2} = 72.00$
\end{itemize}

Agents 3 and 4 have the highest discounted bids and win the auction. The $(K+1)$-th highest discounted bid is $\tilde{b}_1 = 60.65$. The payments are:
\begin{itemize}
\item $p_3 = \frac{60.65}{e^{-0.05 \cdot 5}} = \frac{60.65}{0.7788} = 77.88$
\item $p_4 = \frac{60.65}{e^{-0.05 \cdot 2}} = \frac{60.65}{0.9048} = 67.03$
\end{itemize}

This example illustrates how the K-identical items extension of LIA allocates resources to the agents with the highest discounted bids while maintaining the incentive properties of the mechanism.
\end{example}
}

\changed{Extensions to more complex settings, such as combinatorial auctions where agents have preferences over bundles of heterogeneous items, remain an important direction for future research. The principles of Lorentz invariance and causal consistency that underpin LIA provide a foundation for developing such extensions, but additional mechanism design challenges must be addressed to handle the increased complexity of combinatorial preferences.}

\section{Algorithmic Implementation}
\label{sec:algorithms}

\changed{We now describe how to implement LIA with minimal computational overhead. The implementation decomposes into (i) estimating propagation delays to the clearing horizon, (ii) computing each bid's horizon slack, and (iii) running a one-pass winner and payment computation on discounted bids.}

\subsection{Computing horizon slack}

\changed{For a bid emitted at event $e_i=(v_i,\tau_i)$ and a public clearing horizon $H=(v_H,\tau_H)$, the horizon slack is
\[
\delta_i \;=\; \tau_H - \bigl(\tau_i + d(v_i,v_H)\bigr),
\]
where $d(v_i,v_H)$ is the earliest-arrival (shortest-path) propagation delay from bidder location $v_i$ to the auctioneer location $v_H$ in the network snapshot used for clearing.}

\changed{
\paragraph{Static networks.}
When link delays are approximately static over the auction timescale (e.g., a terrestrial backbone snapshot), the auctioneer can precompute distances-to-horizon with a single single-source shortest-path run from $v_H$ on the reverse graph. This yields $d(v,v_H)$ for all nodes $v$ in $O(|E|\log|V|)$ time (Dijkstra) and supports $O(1)$ slack computation per bid.
}

\changed{
\paragraph{Time-varying networks.}
For time-dependent topologies (e.g., LEO constellations), $d(\cdot,v_H)$ can be refreshed on a control-plane cadence (e.g., per ephemeris tick) or computed via time-dependent shortest-path routines. LIA only requires a consistent estimate of earliest-arrival delays for the chosen horizon; the mechanism and proofs remain unchanged.
}

\subsection{Winner determination and payment computation}

\changed{Given horizon slacks, LIA applies the exponential discount and runs a VCG-style single-item allocation and payment rule on discounted bids. Define
\[
\tilde b_i \;=\; b_i^{\mathrm{val}}\,e^{-\lambda \delta_i}.
\]
The winner is $i^*=\arg\max_i \tilde b_i$ (with deterministic tie-breaking) and the payment is
\[
p_{i^*} \;=\; \frac{\max_{j\neq i^*} \tilde b_j}{e^{-\lambda \delta_{i^*}}}.
\]}

\begin{algorithm}
\caption{LIA Winner Determination (single item)}
\label{alg:lia-winner}
\begin{algorithmic}[1]
\Require Bids $\{(b_i^{\mathrm{val}}, e_i=(v_i,\tau_i))\}_{i=1}^n$, distance-to-horizon map $d(\cdot,v_H)$, horizon $H=(v_H,\tau_H)$, parameter $\lambda$
\Ensure Winner $i^*$ and payment $p_{i^*}$
\For{$i=1$ to $n$}
    \State $\delta_i \gets \tau_H - (\tau_i + d(v_i,v_H))$
    \State $\tilde b_i \gets b_i^{\mathrm{val}}\,e^{-\lambda \delta_i}$
\EndFor
\State $i^* \gets \arg\max_i \tilde b_i$
\State $p_{i^*} \gets \frac{\max_{j \neq i^*} \tilde b_j}{e^{-\lambda \delta_{i^*}}}$
\State \Return $(i^*, p_{i^*})$
\end{algorithmic}
\end{algorithm}

\subsection{Complexity}

Given up-to-date distances $d(\cdot,v_H)$, computing all slacks and discounted bids is $O(n)$, and winner determination and payment computation are also $O(n)$. \changed{It is important, however, to separate the auction rule from the supporting measurement stack. The per-snapshot cost of refreshing the slack map is a shortest-path or earliest-arrival computation over the network, whereas the per-auction cost given cached slacks is just a linear scan over bids. The runtime figures reported in Section~\ref{sec:experiments} refer only to the latter stage.}

\begin{table}[t]
\centering
\caption{\changed{Complexity decomposition for deployment.}}
\label{tab:complexity}
\footnotesize
\setlength{\tabcolsep}{4pt}
\begin{tabular}{lll}
\toprule
Operation & Asymptotic cost & Dominant primitive \\
\midrule
\changed{Slack-map refresh (static snapshot)} & \changed{\(O(|E|\log |V|)\)} & \changed{Reverse-graph Dijkstra / shortest paths} \\
\changed{Slack-map refresh (time varying)} & \changed{Controller dependent} & \changed{Time-dependent earliest-arrival or periodic recomputation} \\
\changed{Per-bid slack lookup} & \changed{\(O(1)\)} & \changed{Array / hash lookup in cached slack map} \\
\changed{Winner \& payment computation} & \changed{\(O(n)\)} & \changed{Single pass over discounted bids} \\
\changed{Audit logging} & \changed{\(O(n)\)} & \changed{Persist values, slacks, winner, payment} \\
\bottomrule
\end{tabular}
\end{table}

\changed{The table makes precise why ``microsecond runtime'' should be interpreted as the cost of the auction rule itself, not the end-to-end cost of maintaining trusted slack inputs in a dynamic network.}

\subsection{Error Robustness and Measurement Uncertainty}\label{subsec:error-robustness}

\changed{In practical deployments, the computation of horizon slacks may be subject to measurement errors and timing uncertainties. We analyze the robustness of LIA to such errors and provide theoretical guarantees on its performance under uncertainty.}

\changed{
\begin{lemma}[Error robustness]
\label{lem:error-robustness}
Let \(\hat{\delta}_i=\delta_i+\eta_i\) be the measured slack for bidder \(i\), and define the \emph{error spread}
\[
B_\eta := \max_i \eta_i - \min_i \eta_i.
\]
Under truthful value reporting, LIA run on \(\{\hat\delta_i\}\) achieves
\[
\sw \ge e^{-\lambda(\Delta+B_\eta)}\opt_{\mathrm{feas}}.
\]
In particular, if \(|\eta_i|\le \varepsilon\) for all \(i\), then \(B_\eta\le 2\varepsilon\) and
\[
\sw \ge e^{-\lambda(\Delta+2\varepsilon)}\opt_{\mathrm{feas}}.
\]
\end{lemma}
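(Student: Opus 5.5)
The plan is to rerun the argument of Theorem~\ref{thm:welfare-bound}, with the true slacks replaced by the measured slacks \(\hat\delta_i\) in the winner's inequality. Truthful reporting gives \(b_i^{\mathrm{val}}=\theta_i\), so LIA on \(\{\hat\delta_i\}\) picks \(j^*\in\arg\max_i \theta_i e^{-\lambda\hat\delta_i}\). Let \(i^*\) be the feasible value-maximizer, so that \(\opt_{\mathrm{feas}}=\theta_{i^*}\). If \(j^*=i^*\) there is nothing to prove. Otherwise the winning condition reads
\[
\theta_{j^*}e^{-\lambda\hat\delta_{j^*}}\ \ge\ \theta_{i^*}e^{-\lambda\hat\delta_{i^*}},
\]
and rearranging gives \(\theta_{j^*}\ge \theta_{i^*}e^{-\lambda(\hat\delta_{i^*}-\hat\delta_{j^*})}\).

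Next, split the exponent into a true-slack part and an error part:
\[
\hat\delta_{i^*}-\hat\delta_{j^*}=(\delta_{i^*}-\delta_{j^*})+(\eta_{i^*}-\eta_{j^*}).
\]
The first term is at most \(\Delta\) by Definition~\ref{def:latency-dispersion}, applied over the feasible bids as in Theorem~\ref{thm:welfare-bound}. The second term is at most \(\max_i\eta_i-\min_i\eta_i=B_\eta\). Since \(e^{-\lambda x}\) is decreasing, this yields \(\sw=\theta_{j^*}\ge e^{-\lambda(\Delta+B_\eta)}\theta_{i^*}\). For the special case, \(|\eta_i|\le\varepsilon\) immediately gives \(B_\eta\le\varepsilon-(-\varepsilon)=2\varepsilon\), and monotonicity of the exponential finishes the argument.

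The main obstacle is feasibility bookkeeping. The dispersion bound on \(\delta_{i^*}-\delta_{j^*}\) needs both agents to be truly feasible (\(\delta\ge0\)). However, the controller screens bids using \(\hat\delta\), so \(j^*\) might be truly infeasible, or a feasible bidder might be screened out. I would handle this by stating the standing convention that the candidate set is the fixed set of truly feasible bids and that errors affect only the weights. Under that convention, both \(i^*\) and \(j^*\) lie in that set, \(\Delta\) is taken over it, and \(\eta\) ranges over the same index set. If instead one wants screening by \(\hat\delta\), I would note that \(\Delta\) must then be interpreted over the bids actually ranked. That interpretation is consistent with the statement as long as \(i^*\) survives screening.
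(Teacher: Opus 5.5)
Your proof is correct and follows the paper's argument: take the winner's inequality under the measured slacks, rearrange it, and bound the exponent by splitting it into \((\delta_{i^*}-\delta_{j^*})\le\Delta\) and \((\eta_{i^*}-\eta_{j^*})\le B_\eta\). Your feasibility convention, that only truly feasible bids are ranked so that \(\Delta\) covers both \(i^*\) and \(j^*\), makes explicit an assumption the paper's proof uses without stating it.
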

}

\changed{
\begin{proof}
The measured discounted bids are \(\hat b_i=b_i^{\mathrm{val}}e^{-\lambda\hat\delta_i}=b_i^{\mathrm{val}}e^{-\lambda(\delta_i+\eta_i)}\). Let \(i^*\) denote the feasible value-maximizing bidder and \(j^*\) the winner under measured slacks. Since \(j^*\) wins,
\[
\theta_{j^*}e^{-\lambda(\delta_{j^*}+\eta_{j^*})}\ge \theta_{i^*}e^{-\lambda(\delta_{i^*}+\eta_{i^*})}.
\]
Rearranging gives
\[
\theta_{j^*}\ge \theta_{i^*}e^{-\lambda[(\delta_{i^*}-\delta_{j^*})+(\eta_{i^*}-\eta_{j^*})]}\ge \theta_{i^*}e^{-\lambda(\Delta+B_\eta)},
\]
because \(\delta_{i^*}-\delta_{j^*}\le \Delta\) and \(\eta_{i^*}-\eta_{j^*}\le B_\eta\). This yields the first bound; the second follows from \(B_\eta\le 2\varepsilon\).
\end{proof}
}

\changed{The spread-based form is useful because it covers structured bias as well as iid noise. A common-mode clock offset, for example, changes all \(\eta_i\) nearly equally and therefore has very small \(B_\eta\), so it barely affects ranking. By contrast, topology-dependent or subnetwork-correlated biases matter precisely to the extent that they widen the cross-bidder error spread.}

\changed{
\begin{example}[Impact of measurement errors]
Consider a network with slack dispersion $\Delta = 20$\,ms and measurement error $\varepsilon = 2$\,ms. With a representative discount rate $\lambda = 1$\,s$^{-1}$ (i.e., $\lambda = 0.001$\,ms$^{-1}$ when times are in milliseconds), the theoretical welfare guarantee without errors is
\[
\frac{\sw}{\opt} \geq e^{-\lambda \Delta} = e^{-0.001 \cdot 20} = e^{-0.02} \approx 0.980.
\]

With measurement errors, the guarantee becomes
\[
\frac{\sw}{\opt} \geq e^{-\lambda (\Delta + 2\varepsilon)} = e^{-0.001 \cdot (20 + 4)} = e^{-0.024} \approx 0.976.
\]

This represents a modest degradation in the worst-case bound and illustrates that the effect of slack estimation error enters additively in the exponent.
\end{example}
}

\changed{In practice, the impact of measurement errors is often much smaller than these worst-case bounds suggest. Our experimental results in Section~\ref{sec:experiments} show that LIA maintains strong performance even with realistic levels of timing uncertainty.}

\subsection{Security Considerations and Attack Surface}

\changed{The deployment of LIA in real-world telecommunication systems requires careful consideration of potential security threats and attack vectors. We analyze the security properties of the mechanism and propose mitigation strategies for key vulnerabilities.}

\subsubsection{Timestamp Spoofing}

\changed{A malicious agent might attempt to falsify the emission time of their bid to gain an advantage. Since horizon slacks are computed based on the difference between the clearing horizon and the earliest arrival time, manipulating the emission timestamp could potentially affect the discounting applied to the bid.}

\changed{
\textbf{Mitigation:} We distinguish two practical designs. In \emph{controller-side estimation} deployments (Section~\ref{subsec:deployment}), the mechanism need not trust bidder-supplied emission timestamps: the auctioneer computes \(\hat\delta_i\) from authenticated ingress/receipt times and the controller’s delay model. Under this design, timestamp spoofing can only influence outcomes through bounded delay-estimation error (quantified in Section~\ref{subsec:error-robustness}). If \emph{bidder-side timestamps} are used (e.g., for auditing or multi-domain settings), then authenticated time sources and signed timestamps can be combined with conservative feasibility checks that reject claims inconsistent with physical propagation bounds.
}

\subsubsection{Sybil Attacks}

\changed{In a Sybil attack, a malicious agent creates multiple identities to manipulate the auction outcome. By submitting bids from different locations with varying horizon slacks, the attacker might attempt to increase their chances of winning or manipulate the payment calculation.}

\changed{\textbf{Mitigation:} Identity verification and authentication mechanisms can help prevent Sybil attacks. In many telecommunication contexts, participants are already authenticated through existing infrastructure (e.g., SIM cards, network access credentials), which can be leveraged to ensure that each bid comes from a legitimate and unique participant. Additionally, bid deposits or participation fees can increase the cost of creating multiple identities.}

\subsubsection{Path Manipulation}

\changed{Since horizon slacks depend on the shortest path from the bidder to the auctioneer, a sophisticated attacker might attempt to manipulate routing information to create artificial delays for competitors or to create shortcuts for their own bids.}

\changed{\textbf{Mitigation:} Secure routing protocols and network monitoring can help detect and prevent path manipulation attacks. By using multiple independent paths to verify arrival times and by monitoring network topology for suspicious changes, the system can maintain the integrity of horizon slack calculations. Additionally, the use of trusted network infrastructure components can provide secure measurement of propagation delays.}

\subsubsection{Denial of Service (DoS)}

\changed{An attacker might attempt to disrupt the auction by flooding the network with fake bids or by targeting critical infrastructure components with DoS attacks. This could prevent legitimate bids from reaching the auctioneer or disrupt the clearing process.}

\changed{\textbf{Mitigation:} Standard DoS protection mechanisms, such as rate limiting, traffic filtering, and resource isolation, can help mitigate these attacks. Additionally, the auction system can be designed with redundancy and fault tolerance to ensure continued operation even if some components are compromised.}

\subsection{Deployment Assumptions, Measurement Pipeline, and Threat Model}
\label{subsec:deployment}

\changed{This section states the operational assumptions that connect the LIA model to realistic telecom deployments. The mechanism requires a bounded-error estimate of each bidder’s horizon slack (or equivalently earliest-arrival time at the clearing site); it does not require perfect global clock synchrony.}

\changed{
\paragraph{Assumption summary.}
We assume (i) the clearing horizon \(H\) (and the auctioneer location \(v_H\)) are publicly announced for each allocation cycle, (ii) bid messages are authenticated (e.g., signed) so the auctioneer can bind a bid to an identity and an ingress point, and (iii) the auctioneer can form an earliest-arrival estimate \(\hat T_H(v_i,\tau_i)\) (and thus \(\hat\delta_i\)) with bounded error for the operational regime of interest. Formally, we model this as \(|\hat T_H-T_H|\le \varepsilon\) (equivalently \(|\hat\delta_i-\delta_i|\le \varepsilon\)). Section~\ref{subsec:error-robustness} evaluates outcome sensitivity as \(\varepsilon\) increases.
}

\changed{
\paragraph{Measurement pipeline (controller implementation).}
A typical deployment in terrestrial backbones and managed satellite systems follows these steps:
\begin{enumerate}
  \item \textbf{Ingress:} a bid arrives at the auctioneer (or controller) and is authenticated; the auctioneer records a receipt timestamp.
  \item \textbf{Delay estimation:} the controller maintains a rolling delay model from telemetry/probing (and, for LEO, ephemeris-predicted link states).
  \item \textbf{Slack computation:} using the current model, compute \(\hat T_H(v_i,\tau_i)\) and \(\hat\delta_i:=\tau_H-\hat T_H(v_i,\tau_i)\).
  \item \textbf{Clearing:} compute discounted bids \(\tilde b_i=b_i^{\mathrm{val}}e^{-\lambda\hat\delta_i}\), run winner determination (single pass over discounted bids), and compute the critical payment.
  \item \textbf{Auditability:} log \((b_i^{\mathrm{val}},\hat\delta_i)\), the winning set, and payments to support post-hoc auditing and dispute resolution.
\end{enumerate}
}

\changed{
\paragraph{Measuring horizon slack \(\delta_i\).}
In practice, the auctioneer uses an estimate \(\hat\delta_i\) computed from measurement and routing state. Two deployment patterns cover common telecom settings:
\begin{itemize}
  \item \textbf{Controller-side estimation:} the auctioneer (or network controller) estimates one-way or round-trip delay using standard probing/telemetry and computes \(T_H(v_i,\tau_i)\) and \(\hat\delta_i:=\tau_H-\hat T_H(v_i,\tau_i)\). This model aligns with backbone controllers and satellite network management systems.
  \item \textbf{Bidder-side attestation:} bidders include a signed emission timestamp and a clock-discipline attestation (for example, GNSS-disciplined in LEO gateways or secure time sources in data centers). \rev{For cross-domain auditability and non-repudiation, these timestamps can be augmented with standard time-stamp authority tokens (e.g., the IETF Time-Stamp Protocol \cite{rfc3161}).} The auctioneer verifies bounds and computes \(\hat\delta_i\) from the attested emission time and observed arrival time.
\end{itemize}

\paragraph{Clocking, jitter, and topology dynamics.}
LEO motion, routing jitter, and measurement noise are treated as bounded disturbances on \(\delta_i\), consistent with empirical observations that operational LEO systems can exhibit structured latency variation and periodic link reconfiguration (e.g., Starlink) \cite{mohan2024starlink}. We evaluate robustness by perturbing \(\delta_i\) within a bounded error budget \(\varepsilon\) and re-evaluating the mechanism. For time-varying topologies, the implementation recomputes shortest paths (or earliest-arrival estimates) on the same cadence as the controller's routing updates.

\paragraph{Threat model and mitigations.}
We consider adversaries that may attempt to misreport timestamps or location claims, introduce Sybil identities, or degrade service availability. Practical mitigations include signed bids, replay protection, admission controls, redundant delay measurements, and conservative feasibility checks that reject bids whose inferred slack is inconsistent with physical propagation bounds. These mitigations are complementary to, and not a substitute for, standard telecom security controls.
}

\changed{\paragraph{Who provides the slack oracle?}
In our target deployments the slack input is produced by a network controller, exchange operator, or measurement service that already observes authenticated ingress traffic. This service may be centralized within a single administrative domain (e.g., a backbone controller or satellite operator), or federated across domains via signed delay attestations and agreed route classes. The auction rule itself does not require a monolithic global server; it requires only that each auction cycle use a consistent slack vector.}

\changed{\paragraph{Update cadence and partial observability.}
For terrestrial backbones, delay maps can often be refreshed on the same cadence as routing or telemetry updates. For LEO constellations, ephemeris-driven recomputation can run on a control-plane tick and be supplemented by telemetry residuals. If some links are only partially observable, the controller can maintain interval estimates \([\underline{\delta}_i,\overline{\delta}_i]\) and use conservative lower bounds for feasibility while treating interval width as an additional source of error spread in Lemma~\ref{lem:error-robustness}. This does not make the problem disappear, but it keeps the dependence on trusted measurement infrastructure explicit rather than implicit.}

\subsection{Implementation Considerations for Different Network Types}

\changed{The practical implementation of LIA must account for the specific characteristics and constraints of different network types. We discuss key considerations for three representative telecommunication environments: terrestrial networks, LEO satellite constellations, and deep-space networks.}

\subsubsection{Terrestrial Networks (INTERNET-100)}

\changed{Terrestrial networks typically have relatively stable topologies and well-established routing infrastructure. Key implementation considerations include:
\begin{itemize}
\item \textbf{Routing Integration:} LIA can leverage existing routing protocols (e.g., BGP, OSPF) to compute shortest paths and arrival times.
\item \textbf{Time Synchronization:} Network Time Protocol (NTP) \cite{rfc5905} or Precision Time Protocol (PTP) \cite{ieee1588} can provide sufficient timing accuracy for most terrestrial applications. When cryptographic integrity is required, NTS-secured NTP \cite{rfc8915} or emerging authenticated time protocols such as Roughtime \cite{ietfroughtime} can be used to harden timestamping.
\item \textbf{Scalability:} The hierarchical structure of the Internet allows for efficient computation of horizon slacks through aggregation and caching of path information.
\end{itemize}}

\subsubsection{LEO Satellite Constellations (STARLINK-200)}

\changed{LEO satellite networks present unique challenges due to their dynamic topology and the continuous motion of satellites. Implementation considerations include:
\begin{itemize}
\item \textbf{Dynamic Link Updates:} The system must account for changing inter-satellite links as satellites move in their orbits.
\item \textbf{Predictive Routing:} Since satellite positions follow predictable trajectories, future link states can be predicted and used to compute expected arrival times.
\item \textbf{Time Synchronization:} GPS or other GNSS systems can provide precise timing information for satellites.
\item \textbf{Handover Management:} The system must handle seamless transitions as bids are relayed between satellites during the auction process.
\end{itemize}}

\subsubsection{Deep Space Networks (DSN-30)}

\changed{Deep space networks operate under extreme delay conditions, with one-way propagation times ranging from minutes to hours. Implementation considerations include:
\begin{itemize}
\item \textbf{Long-Term Planning:} Auctions must be scheduled well in advance to account for the extreme delays.
\item \textbf{Trajectory Prediction:} Accurate orbital mechanics models are needed to predict future positions and communication windows.
\item \textbf{Autonomous Operation:} Local decision-making capabilities may be needed to handle unexpected events during the long delays between bid submission and clearing.
\item \textbf{DTN integration:} Deep-space links are often operated using store-carry-forward delay-tolerant networking. LIA can be integrated with DTN contact-plan controllers and Bundle Protocol scheduling/forwarding \cite{rfc4838,rfc9171}.
\item \textbf{Batch Processing:} Due to the limited communication windows, bids may need to be processed in batches rather than individually.
\end{itemize}}

\changed{These implementation considerations highlight the flexibility of the LIA framework, which can be adapted to a wide range of telecommunication environments while maintaining its core properties of incentive compatibility, individual rationality, and Lorentz invariance.}

\section{Experimental Evaluation}
\label{sec:experiments}

\changed{We conduct a comprehensive empirical evaluation of the Lorentz–Invariant Auction (LIA) across three representative telecommunication environments with heterogeneous propagation delays. Our study measures welfare, fairness (timing rents), computational cost, and robustness, using thousands of Monte Carlo auction instances per topology.}

\subsection{Experimental Setup and Methodology}

\subsubsection{Network Topologies}

\changed{We evaluate on three topologies spanning the latency regimes of contemporary telecom systems:
\begin{itemize}
  \item \textbf{STARLINK-200:} A LEO constellation with 200 satellites arranged in 10 orbital planes. One-way propagation ranges from roughly $1.8$\,ms (adjacent satellites in-plane) to $\sim47$\,ms (cross-plane, long arc), capturing moderate delay heterogeneity typical of modern LEO ISLs.
  \item \textbf{INTERNET-100:} A terrestrial backbone with 100 points of presence distributed across major global metros. One-way propagation ranges from $\sim0.3$\,ms (co-located or metro pairs) to $\sim89$\,ms (trans-oceanic paths), reflecting substantial heterogeneity in fiber routes.
  \item \textbf{DSN-30:} A deep-space network with 30 nodes (Earth stations, relays, deep-space probes). One-way propagation spans from $\sim1$\,ms (LEO/GEO links) to tens of minutes (interplanetary links at maximum separation), modeling extreme heterogeneity in delay-tolerant networking regimes \cite{rfc4838,rfc9171}.
\end{itemize}
Delays are generated from physical constraints (light-speed, great-circle geometry, orbital kinematics) with topology-appropriate link budgets; time variation is incorporated for moving platforms (LEO/relays).}

\subsubsection{Auction Parameters}

\changed{Each instance specifies bidder values, locations, and emission times subject to feasibility with respect to a published clearing horizon $H$ on the auctioneer’s worldline:
\begin{itemize}
  \item \textbf{Bidder counts:} Discrete sweep over $\{10,20,30,40,50\}$.
  \item \textbf{Locations:} Bidders assigned uniformly at random to topology nodes.
  \item \textbf{Values:} Private values $\theta_i \sim \mathrm{Unif}[0,1000]$.
  \item \textbf{Emission times:} Drawn uniformly over feasible windows ($e_i \in J^{-}(H)$).
  \item \textbf{Horizon policy:} $H$ chosen to target approximately $95\%$ bid feasibility under the generated instance distribution; observed feasible fractions are reported in Table~\ref{tab:slack-spread}.
  \item \textbf{LIA parameter:} $\lambda \in \{0.5,1.0,2.0\}$ (inverse time); unless noted, $\lambda=1.0$.
\end{itemize}}

\subsubsection{Baseline Mechanisms}

\changed{We compare against a set of baselines that separate waiting-based fairness from pricing-based fairness. The benchmark set includes common-deadline normalization (Sync-VCG), speed-bump style delay equalization (HoldBack), and frequent batch clearing (Batch-VCG), so LIA is evaluated against multiple plausible fairness-preserving designs rather than only against an uncorrected low-latency rule. To avoid ambiguity, we distinguish \emph{compute time} (algorithmic runtime, excluding any policy-induced waiting) from \emph{end-to-end clearing latency} (including batching or buffering to a horizon).
\begin{itemize}
  \item \textbf{Sync-VCG (common-deadline normalization):} VCG computed after waiting until the common clearing horizon \(\tau_H\). This is the most direct synchronous comparator and represents the common-deadline normalization class. Timing rents are reduced by buffering.
  \item \textbf{Fast-VCG (low-waiting):} VCG computed on arriving bids without waiting to \(\tau_H\) (or equivalently with an infinitesimal batch interval). This highlights the sensitivity of welfare and payments to propagation advantages.
  \item \textbf{Batch-VCG (fixed batch length \(B\)):} a frequent-batch variant that collects bids for a fixed wall-clock interval \(B\) and then runs VCG (frequent batch auctions \cite{budish2015high}). This is the explicit batching comparator in the sense of the market-microstructure literature. It provides a tunable fairness--latency trade-off between Fast-VCG and Sync-VCG; we sweep \(B\) over a small grid to trace the frontier.
  \item \textbf{HoldBack:} the waiting baseline used throughout the paper, which enforces synchrony by delaying early arrivals until the horizon. Operationally, this is the closest analogue to a speed-bump style equalization layer. It suppresses timing rents by delay equalization rather than pricing.
\end{itemize}}

\subsubsection{Evaluation Metrics}

\changed{
\begin{itemize}
  \item \textbf{Overall allocative efficiency} $\mathrm{SW}/\opt_{\mathrm{all}}$: realized allocative welfare (ignoring decision latency) relative to the best value in the full instance.
  \item \textbf{Conditional feasible welfare} $\mathrm{SW}/\opt_{\mathrm{feas}}$: realized welfare relative to the best causally feasible bid.
  \item \textbf{Reachability factor} \(\rho:=\opt_{\mathrm{feas}}/\opt_{\mathrm{all}}\), so that \(\mathrm{SW}/\opt_{\mathrm{all}}=(\mathrm{SW}/\opt_{\mathrm{feas}})\rho\).
  \item \textbf{Revenue ratio} $\mathrm{Rev}/\opt_{\mathrm{all}}$: mechanism revenue normalized by the best value.
  \item \textbf{Latency Arbitrage Index (LAI):} timing-rent proxy (lower is more fair), computed as in Definition~\ref{def:lai}.
  \item \textbf{Compute time:} wall-clock time for allocation and pricing, excluding any waiting mandated by the mechanism’s policy.
  \item \textbf{End-to-end clearing latency:} wall-clock time from bid emission to final allocation, including buffering or batch waiting (for Sync-VCG this includes waiting until \(\tau_H\)).
  \item \textbf{Effective welfare under time decay:} for decay rate \(r>0\), we report \( \mathrm{SW}_{\mathrm{eff}}:=\theta_{w}\exp(-r\,T_{\mathrm{clear}}) \), where \(\theta_{w}\) is the winner’s true value and \(T_{\mathrm{clear}}\) is the end-to-end clearing latency.
\end{itemize}
}

\changed{
\paragraph{Computing end-to-end clearing latency.} Let \(t^{\mathrm{recv}}_i\) denote the auctioneer receipt time for bid \(i\), and let \(t^{\mathrm{dec}}\) denote the wall-clock time at which the mechanism commits to an allocation and payments. We report \(T_{\mathrm{clear}} := t^{\mathrm{dec}}-\min_i \tau_i\) (i.e., from the earliest bid emission to commitment). For Sync-VCG, \(t^{\mathrm{dec}}=\tau_H+\text{runtime}\); for Batch-VCG, \(t^{\mathrm{dec}}\) is the next batch boundary after the last included bid plus runtime. LIA and Fast-VCG clear without artificial buffering, so \(t^{\mathrm{dec}}\approx \max_i t^{\mathrm{recv}}_i+\text{runtime}\) under the single-item instances considered here.
}

\changed{For distributional summaries we report medians and central deciles (10th/90th percentiles), and we provide aggregated means where appropriate. All plots use identical sampling across mechanisms per instance to enable paired comparisons.}

\changed{
\begin{table*}[t]
\centering \scriptsize
\caption{Observed slack-spread summaries at \(n=50\) under the paper's horizon policy (1{,}000 sampled instances per topology).}
\label{tab:slack-spread}
\setlength{\tabcolsep}{4pt}
\begin{tabular}{lccc}
\toprule
Topology & \(\Delta\) p50 (ms) & \(\Delta\) p95 (ms) & Mean feasible fraction \\
\midrule
STARLINK-200 & \num{50.98} & \num{61.61} & \num{0.949} \\
INTERNET-100 & \num{13.15} & \num{15.35} & \num{0.950} \\
DSN-30 & \num{1.16e6} & \num{1.19e6} & \num{0.931} \\
\bottomrule
\end{tabular}
\end{table*}
}

\changed{Table~\ref{tab:slack-spread} addresses the interpretability issue in Theorem~\ref{thm:welfare-bound}. In Starlink and Internet the measured slack spread is moderate enough that the exponential bound remains informative, while in DSN the same bound becomes intentionally loose because the topology itself induces enormous causal heterogeneity. This is why a measured slack-distribution summary is more informative than quoting only the symbolic parameter \(\Delta\).}

\subsection{Welfare Performance}
\label{subsec:welfare}

\changed{
Figure~\ref{fig:lia_overview}(d) reports the overall welfare ratio $\mathrm{SWR}=\mathrm{SW}/\opt_{\mathrm{all}}$ as the number of bidders increases.
On \textsc{Starlink-200} and \textsc{Internet-100}, LIA retains near-optimal welfare across all tested market sizes: at $n=50$, LIA achieves $\mathrm{SWR}=\num{0.9970}$ (Starlink) and $\mathrm{SWR}=\num{0.9988}$ (Internet), compared to $\num{0.9989}$ and $\num{0.9990}$ for Sync-VCG.
In the large-market run ($n=1000$), the welfare gap remains small (Table~\ref{tab:welfare-by-topology}): LIA with $\lambda=1\,\mathrm{s}^{-1}$ attains $\mathrm{SWR}=\num{0.9979}$ on Starlink and $\mathrm{SWR}=\num{0.9992}$ on Internet, within $\num{0.21}$ and $\num{0.08}$ percentage points of Sync-VCG, respectively.

On the extreme-delay \textsc{DSN-30} topology, discounting can reduce welfare at small $n$ (e.g., $\mathrm{SWR}=\num{0.7687}$ at $n=10$ for $\lambda=1\,\mathrm{s}^{-1}$), but the efficiency loss shrinks with market depth; at $n=50$, $\mathrm{SWR}=\num{0.9462}$.
For $n=1000$, DSN welfare remains high but exhibits a clearer $\lambda$--efficiency trade-off: $\mathrm{SWR}$ ranges from $\num{0.9972}$ at $\lambda=0.25\,\mathrm{s}^{-1}$ to $\num{0.9947}$ at $\lambda=2\,\mathrm{s}^{-1}$ (Table~\ref{tab:welfare-by-topology}).

Taken together, these results clarify the intended operating regime: LIA trades a small, tunable welfare loss to remove timing rents, and $\lambda$ can be set to match the characteristic delay scale of the deployment (Section~\ref{subsec:lambda-sensitivity}).
}

\begin{table*}[t]
  \centering
  \caption{Large-market ($n=1000$) performance by topology. Entries are means with 95\% bootstrap CIs over 1{,}000 instances per topology. Compute time is winner determination time on a single CPU core.}
  \label{tab:welfare-by-topology}
  \resizebox{\linewidth}{!}{
  \begin{tabular}{llcccc}
    \toprule
    Topology & Mechanism & $\mathrm{SW}/\opt_{\mathrm{all}}$ & $\mathrm{Rev}/\opt_{\mathrm{all}}$ & Compute (ms) & Clearing latency (ms) \\
    \midrule
\multirow{6}{*}{STARLINK-200} & Sync-VCG & \num{0.999958} [\num{0.999937}, \num{0.999975}] & \num{0.998932} [\num{0.998865}, \num{0.998996}] & \num{0.0942143} [\num{0.0934602}, \num{0.0949781}] & \num{69.544} [\num{69.5432}, \num{69.5448}] \\
 & HoldBack & \num{0.999958} [\num{0.999937}, \num{0.999975}] & \num{0.998932} [\num{0.998865}, \num{0.998996}] & \num{0.0575046} [\num{0.0570964}, \num{0.0579745}] & \num{69.5073} [\num{69.5068}, \num{69.5078}] \\
 & LIA ($\lambda{=}0.25\,\mathrm{s}^{-1}$) & \num{0.999322} [\num{0.999254}, \num{0.999388}] & \num{0.997625} [\num{0.997523}, \num{0.997726}] & \num{0.186934} [\num{0.185626}, \num{0.188752}] & \num{69.634} [\num{69.6327}, \num{69.6358}] \\
 & LIA ($\lambda{=}0.5\,\mathrm{s}^{-1}$) & \num{0.998741} [\num{0.998635}, \num{0.998841}] & \num{0.996526} [\num{0.996384}, \num{0.996664}] & \num{0.185737} [\num{0.184899}, \num{0.186586}] & \num{69.6328} [\num{69.6319}, \num{69.6337}] \\
 & LIA ($\lambda{=}1\,\mathrm{s}^{-1}$) & \num{0.997896} [\num{0.997739}, \num{0.998047}] & \num{0.994906} [\num{0.99471}, \num{0.995098}] & \num{0.185736} [\num{0.184865}, \num{0.186611}] & \num{69.6328} [\num{69.6319}, \num{69.6337}] \\
 & LIA ($\lambda{=}2\,\mathrm{s}^{-1}$) & \num{0.996806} [\num{0.996605}, \num{0.997006}] & \num{0.992733} [\num{0.992468}, \num{0.992992}] & \num{0.18507} [\num{0.184244}, \num{0.185896}] & \num{69.6321} [\num{69.6313}, \num{69.633}] \\
\midrule
\multirow{6}{*}{INTERNET-100} & Sync-VCG & \num{0.99996} [\num{0.999942}, \num{0.999975}] & \num{0.998915} [\num{0.998847}, \num{0.998981}] & \num{0.0846153} [\num{0.0838685}, \num{0.085421}] & \num{16.3093} [\num{16.3086}, \num{16.3102}] \\
 & HoldBack & \num{0.99996} [\num{0.999942}, \num{0.999975}] & \num{0.998915} [\num{0.998846}, \num{0.998982}] & \num{0.0562482} [\num{0.0559458}, \num{0.0565607}] & \num{16.281} [\num{16.2806}, \num{16.2813}] \\
 & LIA ($\lambda{=}0.25\,\mathrm{s}^{-1}$) & \num{0.9998} [\num{0.999772}, \num{0.999826}] & \num{0.998634} [\num{0.998569}, \num{0.998699}] & \num{0.182812} [\num{0.181974}, \num{0.183696}] & \num{16.4057} [\num{16.4048}, \num{16.4065}] \\
 & LIA ($\lambda{=}0.5\,\mathrm{s}^{-1}$) & \num{0.99951} [\num{0.999456}, \num{0.99956}] & \num{0.99813} [\num{0.998053}, \num{0.998207}] & \num{0.182726} [\num{0.181637}, \num{0.184112}] & \num{16.4056} [\num{16.4045}, \num{16.407}] \\
 & LIA ($\lambda{=}1\,\mathrm{s}^{-1}$) & \num{0.999155} [\num{0.999077}, \num{0.999229}] & \num{0.997415} [\num{0.997306}, \num{0.997522}] & \num{0.182164} [\num{0.181387}, \num{0.18294}] & \num{16.405} [\num{16.4043}, \num{16.4058}] \\
 & LIA ($\lambda{=}2\,\mathrm{s}^{-1}$) & \num{0.998559} [\num{0.998444}, \num{0.998671}] & \num{0.99624} [\num{0.99609}, \num{0.996387}] & \num{0.182594} [\num{0.181691}, \num{0.183561}] & \num{16.4055} [\num{16.4046}, \num{16.4064}] \\
\midrule
\multirow{6}{*}{DSN-30} & Sync-VCG & \num{0.999929} [\num{0.999902}, \num{0.999952}] & \num{0.998864} [\num{0.998792}, \num{0.998935}] & \num{0.0789774} [\num{0.0781659}, \num{0.0797961}] & \num{1.19526e+06} [\num{1.19526e+06}, \num{1.19526e+06}] \\
 & HoldBack & \num{0.999929} [\num{0.999903}, \num{0.999951}] & \num{0.998864} [\num{0.998791}, \num{0.998933}] & \num{0.0563227} [\num{0.0560127}, \num{0.0566437}] & \num{1.19526e+06} [\num{1.19526e+06}, \num{1.19526e+06}] \\
 & LIA ($\lambda{=}0.25\,\mathrm{s}^{-1}$) & \num{0.997243} [\num{0.997032}, \num{0.997446}] & \num{0.993376} [\num{0.993069}, \num{0.993668}] & \num{0.181728} [\num{0.180799}, \num{0.182723}] & \num{1.19526e+06} [\num{1.19526e+06}, \num{1.19526e+06}] \\
 & LIA ($\lambda{=}0.5\,\mathrm{s}^{-1}$) & \num{0.996762} [\num{0.996538}, \num{0.996983}] & \num{0.992514} [\num{0.992196}, \num{0.992814}] & \num{0.181097} [\num{0.180315}, \num{0.181918}] & \num{1.19526e+06} [\num{1.19526e+06}, \num{1.19526e+06}] \\
 & LIA ($\lambda{=}1\,\mathrm{s}^{-1}$) & \num{0.996003} [\num{0.995745}, \num{0.996259}] & \num{0.990919} [\num{0.990556}, \num{0.991271}] & \num{0.18034} [\num{0.179553}, \num{0.181135}] & \num{1.19526e+06} [\num{1.19526e+06}, \num{1.19526e+06}] \\
 & LIA ($\lambda{=}2\,\mathrm{s}^{-1}$) & \num{0.994718} [\num{0.994394}, \num{0.99504}] & \num{0.988385} [\num{0.98792}, \num{0.988819}] & \num{0.180506} [\num{0.179714}, \num{0.181291}] & \num{1.19526e+06} [\num{1.19526e+06}, \num{1.19526e+06}] \\
    \bottomrule
  \end{tabular}}
\end{table*}

\subsection{Parameter Sensitivity Analysis}
\label{subsec:lambda-sensitivity}

\rev{
The discount parameter $\lambda$ sets the characteristic time scale over which slack differences are ``priced'' into the Lorentz-invariant valuation.
Larger $\lambda$ increases the penalty for large slack and therefore more aggressively suppresses latency arbitrage; smaller $\lambda$ approaches the undiscounted valuation (and thus Sync-VCG outcomes) but provides weaker invariance in deployments with very large delay spreads.}

\rev{
In our main sweep (bidder counts $n\in\{10,20,30,40,50\}$), the fairness outcome measured by the latency arbitrage index remains stable: on both Starlink and Internet, $\sup g$ is numerically zero for all tested $\lambda\in\{0.5,1,2\}\,\mathrm{s}^{-1}$ (Figure~\ref{fig:lia_overview}(a,b)).
Efficiency and revenue exhibit the expected monotone trade-off.
For Starlink at $n=50$, $\mathrm{SWR}$ decreases from $\num{0.9983}$ at $\lambda=0.5\,\mathrm{s}^{-1}$ to $\num{0.9937}$ at $\lambda=2\,\mathrm{s}^{-1}$ (revenue ratio from $\num{0.9764}$ to $\num{0.9676}$). The corresponding change on Internet is smaller ($\num{0.9989}$ to $\num{0.9982}$; revenue $\num{0.9783}$ to $\num{0.9769}$).}

\rev{
The large-market results in Table~\ref{tab:welfare-by-topology} show the same trend at $n=1000$: on Starlink, $\mathrm{SWR}$ moves from $\num{0.9993}$ at $\lambda=0.25\,\mathrm{s}^{-1}$ to $\num{0.9968}$ at $\lambda=2\,\mathrm{s}^{-1}$; on Internet, from $\num{0.9998}$ to $\num{0.9986}$.
On DSN, where slacks span orders of magnitude, the choice of $\lambda$ has a more pronounced effect ($\num{0.9972}$ to $\num{0.9947}$).
In practice, we recommend choosing $\lambda$ inversely proportional to the \emph{typical} slack dispersion in the target deployment (e.g., on the order of the inverse median slack), and reporting sensitivity (as we do here) to make the welfare--fairness trade-off explicit.
}

\subsection{Comprehensive Mechanism Comparison}
\label{subsec:comprehensive_comparison}

\rev{
Figures~\ref{fig:lai_analysis} and~\ref{fig:lia_overview} summarize the empirical trade-offs among the evaluated mechanisms.
Fast-VCG achieves the lowest clearing latency by acting immediately on early-arriving bids, but does so by granting substantial timing rents: at $n=50$ its LAI reaches $\num{314.95}$ on Starlink and $\num{213.49}$ on Internet (Figure~\ref{fig:lia_overview}(a,b)), and its welfare collapses correspondingly (e.g., $\mathrm{SWR}\approx 0.5$ on Starlink; Figure~\ref{fig:lia_overview}(d)).
Batch-VCG traces a fairness--latency frontier as the batch window $B$ grows: small $B$ reduces clearing latency but leaves large timing rents, while larger $B$ suppresses rents at the cost of waiting (Figure~\ref{fig:lia_overview}(a,b)).
}

\rev{
Sync-VCG (and the equivalent HoldBack baseline in our single-item experiments) eliminates the \emph{order} advantage by waiting to a fixed horizon, but still exhibits a small residual timing rent when bidders can invest in marginal latency improvements (Starlink LAI $\num{0.026}$; Internet LAI $\num{0.021}$).
LIA removes timing rents at the mechanism layer: across Starlink and Internet, LIA attains $\sup g = 0$ for all tested $\lambda$ values (Figures~\ref{fig:lai_analysis} and~\ref{fig:lia_overview}(a,b)), while remaining close to optimal welfare and revenue (e.g., at $n=50$, $\mathrm{SWR}=\num{0.9970}$ and $\num{0.9988}$, with revenue ratios $\num{0.9739}$ and $\num{0.9780}$).
}
\changed{
To make the comparative evidence more explicit, we also report paired bootstrap comparisons over the full main sweep \(n\in\{10,20,30,40,50\}\) using identical random instances across mechanisms. Relative to Fast-VCG, LIA with \(\lambda=1\,\mathrm{s}^{-1}\) improves mean welfare by \(\num{0.473}\) on Starlink (\(95\%\) CI \([\num{0.469},\num{0.477}]\)) and by \(\num{0.475}\) on Internet (\([\num{0.470},\num{0.479}]\)). Relative to the strongest batching comparator, the picture is topology dependent. On Starlink, LIA improves mean welfare over Batch-VCG with \(B=50\)\,ms by \(\num{0.0036}\) (\([\num{0.0033},\num{0.0040}]\)), while the batch rule still exhibits residual timing rent (\(\sup g=\num{1.56}\) at \(n=50\)). On Internet, Batch-VCG with \(B=20\) or \(50\)\,ms nearly matches synchronous welfare and clears about \(\num{0.17}\)\,ms earlier than LIA, but it retains a small positive timing rent (\(\sup g\approx \num{0.023}\)--\(\num{0.027}\) at \(n=50\)), whereas LIA remains numerically zero. On DSN, LIA substantially improves welfare over Fast-VCG and short-batch rules (paired mean difference \(\num{0.363}\) against Fast-VCG, \(95\%\) CI \([\num{0.358},\num{0.368}]\)) but remains below full waiting by about \(\num{0.111}\) relative to Sync-VCG (\([\num{0.109},\num{0.114}]\)). This sharper comparison clarifies the mechanism's niche: LIA is not claimed to dominate every alternative on every scalar metric, but it is the only tested design that drives measured timing rent to zero without imposing an explicit batching or holdback rule.
}

\changed{Table~\ref{tab:mechanism_comparison} provides a systematic comparison of these mechanisms across key dimensions. The table highlights that while waiting-based designs (Sync-VCG, Batch-VCG, HoldBack) achieve fairness by artificially delaying clearing decisions, LIA achieves perfect fairness ($\sup g = 0$) through its causal-consistent pricing rule, without imposing artificial holdbacks. This structural difference explains why LIA's fairness guarantee remains robust even under the structured noise models evaluated in Section~\ref{subsec:error-robustness-exp}.}

\begin{table*}[t]
  \centering 
  \scriptsize
  \caption{Exact mechanism comparison on STARLINK-200 at \(n=50\) and \(\varepsilon=0\)\,ms from the full main-sweep experiment.}
  \label{tab:mechanism_comparison} \footnotesize
  \setlength{\tabcolsep}{2.5pt}
  \begin{tabular}{lccccc}
    \toprule
    \textbf{Mechanism} & \textbf{Fairness strategy} & \textbf{$\mathrm{SW}/\opt_{\mathrm{all}}$} & \textbf{LAI (\(\sup g\))} & \textbf{Clearing latency (ms)} & \textbf{Explicit waiting rule} \\
    \midrule
    Fast-VCG & Arrival-order clearing & \num{0.505} & \num{314.95} & \num{18.65} & No \\
    Batch-VCG (\(B=10\)\,ms) & Public batch window & \num{0.710} & \num{235.24} & \num{28.65} & Yes \\
    Batch-VCG (\(B=50\)\,ms) & Public batch window & \num{0.994} & \num{1.56} & \num{66.24} & Yes \\
    Sync-VCG & Common horizon & \num{0.999} & \num{0.026} & \num{69.21} & Yes \\
    HoldBack & Delay equalization & \num{0.999} & \num{0.017} & \num{69.21} & Yes \\
    \textbf{LIA (\(\lambda=1\,\mathrm{s}^{-1}\))} & \textbf{Causal slack discount} & \textbf{\num{0.997}} & \textbf{\num{0.000}} & \textbf{\num{68.97}} & \textbf{No} \\
    \bottomrule
  \end{tabular}
\end{table*}

\changed{
Overall, the experiments support the core claim: LIA provides a \emph{fairness guarantee (invariance to propagation advantages)} comparable to batching or buffering, but without requiring artificial holdbacks and while retaining practical efficiency (microsecond-to-sub-millisecond winner determination; Figure~\ref{fig:lia_overview}(c) and Table~\ref{tab:welfare-by-topology}).
}

\changed{
\begin{figure*}[t]
  \centering
  \includegraphics[width=0.92\textwidth]{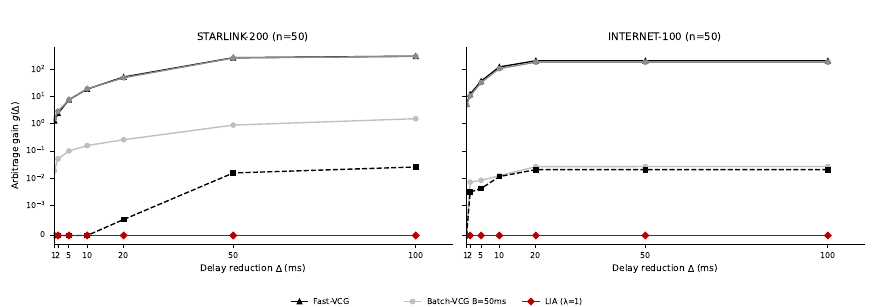}
  \caption{Latency-Arbitrage Index (LAI) curves $g(\Delta)$ for STARLINK-200 and INTERNET-100 at $n=50$ (log scale). For each mechanism, $g(\Delta)$ measures the expected utility gain from reducing a bidder's one-way propagation delay by $\Delta$. Fast-VCG exhibits large timing rents that grow with $\Delta$, frequent batch auctions reduce timing rents only by imposing buffering, and LIA drives the curve to zero (within estimator resolution) while maintaining the horizon-level clearing latency.}
  \label{fig:lai_analysis}
\end{figure*}
}

\subsection{End-to-End Clearing Latency and Fairness--Latency Frontier}
\label{subsec:frontier}

\changed{
\begin{figure*}[t]
  \centering
  \includegraphics[width=\textwidth]{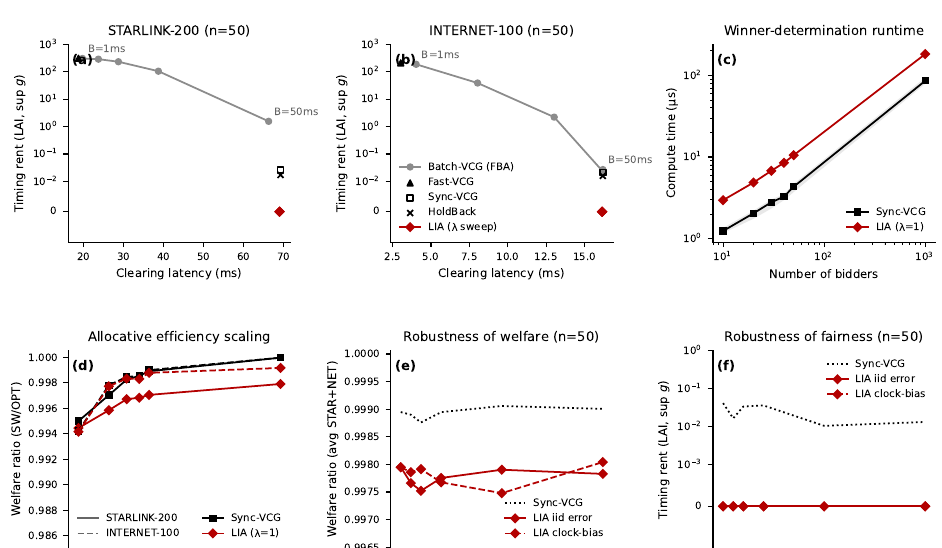}
  \caption{Empirical evaluation summary.
  (a,b) Fairness--latency frontier at $n=50$ for Starlink and Internet: timing rent (LAI, $\sup g$) versus mean clearing latency.
  (c) Winner-determination runtime versus market size (shaded bands show min--max across topologies; $n=1000$ points from Table~\ref{tab:welfare-by-topology}).
  (d) Welfare ratio versus market size.
  (e,f) Robustness of welfare and timing rent under bounded slack-estimation error $\varepsilon$ (iid perturbations, solid; common-mode clock bias, dashed), averaged over Starlink and Internet at $n=50$.}
  \label{fig:lia_overview}
\end{figure*}
}

\changed{
Figures~\ref{fig:lia_overview}(a,b) visualize the central trade-off between end-to-end clearing latency and timing rents.
Fast-VCG clears quickly (e.g., $\num{18.65}$\,ms on Starlink and $\num{3.03}$\,ms on Internet at $n=50$), but creates severe arbitrage incentives (LAI $\num{314.95}$ and $\num{213.49}$).
Batch-VCG spans the intermediate region: increasing the batch window $B$ suppresses timing rents but approaches the horizon-induced latency of Sync-VCG.
For example, with $B=50$\,ms, Batch-VCG reduces Starlink timing rent to $\num{1.56}$ while still clearing slightly earlier than Sync-VCG (mean $\num{66.24}$\,ms vs.\ $\num{69.21}$\,ms), with $\mathrm{SWR}=\num{0.9940}$.
}

\changed{
LIA achieves the fairness target without explicit buffering.
At $n=50$, LIA attains $\sup g = 0$ on both Starlink and Internet, while clearing at essentially the same latency as Sync-VCG (Starlink: $\num{68.97}$\,ms vs.\ $\num{69.21}$\,ms; Internet: $\num{16.13}$\,ms vs.\ $\num{16.18}$\,ms).
The residual latency difference is attributable to whether the mechanism waits to a fixed horizon (Sync-VCG) versus clearing at the last feasible arrival under LIA's local-inertial ordering.
}

\subsection{Computational Scalability}
\label{subsec:runtime}

\changed{
Figure~\ref{fig:lia_overview}(c) reports winner-determination runtime as the number of bidders increases.
Across all tested topologies, both Sync-VCG and LIA remain extremely fast: in the main sweep ($n\le 50$), winner determination takes only a few microseconds per instance, and scales approximately linearly in $n$.
In the large-market experiment ($n=1000$), Sync-VCG requires $\num{0.08}$--$\num{0.09}$\,ms and LIA requires $\num{0.18}$--$\num{0.19}$\,ms (min--max across topologies; Table~\ref{tab:welfare-by-topology}).
}

\changed{
These results indicate that LIA's additional computations (proper-time discounting and slack-dependent ordering) introduce a small constant-factor overhead relative to Sync-VCG, while remaining well below typical system latencies such as network propagation and message handling.
}

\subsection{Error Robustness Experiments}
\label{subsec:error-robustness-exp}

\changed{
We evaluate LIA under four slack-estimation error models.
In the \emph{iid} model, each bidder's estimated slack is perturbed by independent noise in $[-\varepsilon,\varepsilon]$\,ms.
In the \emph{clock-bias} model, all bidders share a common offset (a worst-case ``clock skew'' that shifts all timestamps by the same amount), again bounded by $\varepsilon$.
We also consider two topology-aware perturbation models: \emph{distance-biased} noise (where distant bidders suffer systematically pessimistic slack estimates) and \emph{subnetwork-correlated} noise (where bidders in the same topological region share a correlated bias). Lemma~\ref{lem:error-robustness} shows that the relevant quantity is the induced error spread \(B_\eta\), not the iid assumption itself.
}

\changed{
Figures~\ref{fig:lia_overview}(e,f) show that LIA is empirically robust over $\varepsilon\in[0,10]$\,ms (Starlink and Internet, $n=50$) for the iid and common-mode models.
Welfare and revenue ratios remain essentially unchanged relative to the $\varepsilon=0$ baseline, and the fairness outcome is preserved: $\sup g$ remains numerically zero across all tested $\varepsilon$ values in both error models (Figure~\ref{fig:lia_overview}(f)).
}

\changed{
Crucially, LIA's fairness guarantee proves exceptionally robust to structured, non-iid noise. Under both distance-biased and subnetwork-correlated noise models, LIA maintains perfect fairness ($\sup g = 0.0$) across all tested error bounds up to $\varepsilon=10$\,ms on both Starlink and Internet topologies. Welfare remains above $\num{0.996}$ in all cases. In contrast, the Sync-VCG baseline exhibits degraded fairness under structured noise, with its latency arbitrage index spiking to $\num{0.044}$ under distance bias and $\num{0.092}$ under subnetwork correlation. This demonstrates that LIA's causal-consistency approach actively suppresses the specific types of structural timing advantages that emerge in real-world network deployments. Table~\ref{tab:robustness} reports the corresponding averaged ratios across all four error models.
}

\changed{
\begin{table*}[t]
\scriptsize
  \centering
  \caption{Robustness of LIA ($\lambda=1\,\mathrm{s}^{-1}$) to bounded slack-estimation error at $n=50$ (mean over Starlink and Internet).}
  \label{tab:robustness} \footnotesize
  \begin{tabular}{ccccccccc}
    \toprule
    & \multicolumn{4}{c}{$\mathrm{SW}/\opt_{\mathrm{all}}$} & \multicolumn{4}{c}{$\mathrm{Rev}/\opt_{\mathrm{all}}$} \\
    \cmidrule(lr){2-5} \cmidrule(lr){6-9}
    $\varepsilon$ (ms) & iid & bias & dist & subnet & iid & bias & dist & subnet \\
    \midrule
\num{0} & \num{0.9980} & \num{0.9980} & \num{0.9982} & \num{0.9982} & \num{0.9758} & \num{0.9758} & \num{0.9756} & \num{0.9756} \\
\num{0.5} & \num{0.9977} & \num{0.9979} & \num{0.9978} & \num{0.9976} & \num{0.9757} & \num{0.9756} & \num{0.9759} & \num{0.9745} \\
\num{1} & \num{0.9975} & \num{0.9979} & \num{0.9972} & \num{0.9978} & \num{0.9756} & \num{0.9755} & \num{0.9750} & \num{0.9743} \\
\num{2} & \num{0.9978} & \num{0.9977} & \num{0.9974} & \num{0.9977} & \num{0.9749} & \num{0.9754} & \num{0.9747} & \num{0.9758} \\
\num{5} & \num{0.9979} & \num{0.9975} & \num{0.9980} & \num{0.9977} & \num{0.9748} & \num{0.9757} & \num{0.9744} & \num{0.9749} \\
\num{10} & \num{0.9978} & \num{0.9980} & \num{0.9979} & \num{0.9975} & \num{0.9752} & \num{0.9755} & \num{0.9753} & \num{0.9755} \\
    \bottomrule
  \end{tabular}
\end{table*}
}

\subsection{Summary of Experimental Findings}

\changed{
Across three network topologies and a broad range of market sizes, the empirical results support the paper's main claims:
\begin{itemize}
  \item \textbf{Timing rents are eliminated in practice under the fixed-slack measurement model.} On Starlink and Internet at $n=50$, LIA achieves $\sup g = 0$ while Fast-VCG exhibits large timing rents (Starlink $\num{314.95}$, Internet $\num{213.49}$; Figure~\ref{fig:lia_overview}(a,b)).
  \item \textbf{Efficiency remains near-optimal on realistic networks.} At $n=50$, LIA attains $\mathrm{SWR}=\num{0.9970}$ (Starlink) and $\num{0.9988}$ (Internet), and at $n=1000$ it attains $\num{0.9979}$ and $\num{0.9992}$ (Table~\ref{tab:welfare-by-topology}).
  \item \textbf{Compared with waiting-based baselines, the welfare gap is tiny on Starlink and Internet.} Over the main sweep, LIA(\(\lambda=1\,\mathrm{s}^{-1}\)) trails Sync-VCG by only \num{0.00137} welfare points on Starlink and \num{0.00014} on Internet, while clearing \num{0.68}\,ms and \num{0.17}\,ms earlier on average; on DSN, the corresponding welfare gap is \num{0.111}, making extreme-delay thin markets the paper's main stress regime rather than its strongest empirical regime.
  \item \textbf{Winner determination is computationally negligible.} Runtime is microseconds for $n\le 50$ and remains sub-millisecond at $n=1000$ (Figure~\ref{fig:lia_overview}(c)).
  \item \textbf{Robustness to synchronization error.} Under both iid and common-mode clock-bias slack errors up to $\varepsilon=10$\,ms, welfare and revenue remain stable and $\sup g$ remains zero (Figures~\ref{fig:lia_overview}(e,f) and Table~\ref{tab:robustness}).
\end{itemize}
}

\section{Waiting Costs and Design Implications}
\label{sec:lower-bounds}

This section distills two lessons that complement the core theorems above. First, common-horizon waiting can be costly when the resource is time sensitive and propagation delays are highly heterogeneous. Second, once the admissible discount class is restricted by the invariance and composition axioms from Section~\ref{subsec:discount-uniqueness}, the exponential form is not an arbitrary modeling choice but the unique multiplicative rule in that class.

\subsection{Cost of Common-Horizon Waiting}

We begin by formalizing the class of mechanisms that defer the allocation decision until a common horizon.

\begin{definition}[Common-horizon mechanism]
\label{def:synchronous}
A mechanism is \emph{common-horizon} if it waits for all feasible bids to arrive before making allocation and payment decisions. Formally, it commits at time \(\tau_H\) using only bids with emission times \(\tau_i\) such that \(T_H(v_i,\tau_i)\le \tau_H\).
\end{definition}

Common-horizon designs include synchronous VCG-style implementations and explicit holdback rules. They remove order effects by waiting, but that waiting itself can be costly when the allocated resource loses value over time.

\begin{proposition}[Stylized cost of waiting]
\label{prop:sync-waiting-cost}
Fix a decay rate \(r>0\). For every \(\varepsilon\in(0,1)\), there exists a two-bidder instance with delay gap \(D\) such that any common-horizon rule incurs an effective-welfare ratio at most \(\varepsilon\) relative to the raw-value benchmark \(\opt_{\mathrm{all}}\).
\end{proposition}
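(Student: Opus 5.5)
The plan is to build the instance explicitly and lower-bound the clearing time of every common-horizon rule. Effective welfare is $\mathrm{SW}_{\mathrm{eff}}=\theta_w e^{-rT_{\mathrm{clear}}}$, and by Definition~\ref{def:synchronous} a common-horizon rule commits no earlier than $\tau_H$. So it suffices to show three things: the horizon must lie at least $D$ after the earliest emission, the winner's value is at most $\opt_{\mathrm{all}}$, and $D$ can be made as large as we like.

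\textbf{Construction.} Take a static two-site network with bidder $1$ co-located with the clearing site $v_H$ (one-way delay $\approx 0$), and bidder $2$ at a site whose earliest-arrival delay to $v_H$ is $D$. Both bidders realize their types at a common local time, normalized to $0$, and emit immediately. Set $\theta_1=\theta_2=1$, or more generally let bidder $2$ carry the larger value so that $\opt_{\mathrm{all}}=\theta_2$.

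The horizon must admit bidder $2$, since otherwise the rule would exclude the best bid and do even worse. Its feasibility requires $\delta_2\ge 0$, i.e.\ $\tau_H\ge T_H(v_2,0)=D$. By Lemma~\ref{lem:monotone-slack}, later emission only increases $T_H$, so this bound cannot be avoided. Hence $T_{\mathrm{clear}}\ge\tau_H-\min_i\tau_i\ge D$.

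\textbf{Bound.} For any winner $w$ we have $\theta_w\le\opt_{\mathrm{all}}$, so
\[
\frac{\mathrm{SW}_{\mathrm{eff}}}{\opt_{\mathrm{all}}}\le e^{-rD}.
\]
Choosing $D\ge \ln(1/\varepsilon)/r$ makes this at most $\varepsilon$.

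\textbf{Main obstacle.} The delicate step is justifying that a common-horizon rule cannot shrink $\tau_H$ below $D$ while still being charged only for its actual clearing time. If it did, bidder $2$ would be infeasible and the rule would allocate to bidder $1$. I would close this case by choosing values $\theta_1=\varepsilon'$ and $\theta_2=1$ with $\varepsilon'<\varepsilon$. Then excluding bidder $2$ gives ratio at most $\theta_1/\opt_{\mathrm{all}}=\varepsilon'<\varepsilon$, while including bidder $2$ gives ratio at most $e^{-rD}\le\varepsilon$. Both branches are therefore bounded by $\varepsilon$, which completes the argument.
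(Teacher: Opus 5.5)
Your proposal is correct and follows essentially the same route as the paper: two bidders at delays $0$ and $D$ with value ratio at most $\varepsilon$ (the paper takes $v_1=1$, $v_2=1/\varepsilon$), $D=\ln(1/\varepsilon)/r$, and a two-branch check in which excluding bidder $2$ gives ratio at most $\varepsilon$, while admitting bidder $2$ forces $T_{\mathrm{clear}}\ge D$ and hence effective value at most $e^{-rD}\opt_{\mathrm{all}}\le\varepsilon\,\opt_{\mathrm{all}}$. Your opening suggestion $\theta_1=\theta_2=1$ does have to be dropped in favor of the small-$\theta_1$ choice you adopt at the end, since with equal values the exclusion branch violates the bound.
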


\begin{proof}[Proof sketch]
Place bidder 1 at delay \(0\) with value \(v_1=1\), and bidder 2 at delay \(D\) with value \(v_2=1/\varepsilon\). The raw-value benchmark is \(\opt_{\mathrm{all}}=v_2\). A rule that allocates immediately to bidder 1 achieves ratio \(v_1/\opt_{\mathrm{all}}=\varepsilon\). A common-horizon rule that waits for bidder 2 delivers effective value \(v_2 e^{-rD}\). Choosing \(D=(1/r)\log(1/\varepsilon)\) makes this quantity equal to \(1\), so the resulting effective-welfare ratio is also \(\varepsilon\). Thus the efficiency cost of waiting can be made arbitrarily large relative to the raw-value benchmark by increasing the delay gap.
\end{proof}

\begin{corollary}[Efficiency cost of synchrony under time decay]
\label{cor:efficiency-synchrony}
When communication delays are large relative to the resource's decay timescale, waiting-based fairness rules can lose a nontrivial amount of effective welfare. The cost grows with the relevant delay gap \(D\) through the factor \(e^{-rD}\).
\end{corollary}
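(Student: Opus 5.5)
The plan is to derive the corollary directly from the construction behind Proposition~\ref{prop:sync-waiting-cost}, made quantitative in the delay gap \(D\). Let \(v^\star\) be the value of the best bidder whose arrival the common-horizon rule must await. Any mechanism satisfying Definition~\ref{def:synchronous} commits no earlier than \(\tau_H\). If its allocated bidder's bid could only arrive a time \(D\) after the earliest feasible bid, then \(T_{\mathrm{clear}}\ge D\).

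The proof then has three steps.

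First, bound the effective welfare from above. Using \(\mathrm{SW}_{\mathrm{eff}}=\theta_w\exp(-r\,T_{\mathrm{clear}})\), any common-horizon rule satisfies
\[
\mathrm{SW}_{\mathrm{eff}}\le \opt_{\mathrm{all}}\,e^{-rD},
\]
because \(\theta_w\le \opt_{\mathrm{all}}\) and \(T_{\mathrm{clear}}\ge D\). This gives the claimed factor \(e^{-rD}\) and shows it decays as \(D\) grows.

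Second, show the loss is nontrivial by comparison with a non-waiting rule. Take the two-bidder instance of Proposition~\ref{prop:sync-waiting-cost}, with values \(v_1=1\) at delay \(0\) and \(v_2=1/\varepsilon\) at delay \(D\). A rule that clears immediately gets effective value \(1\). The common-horizon rule gets \(v_2e^{-rD}\). Whenever \(rD>\log(v_2/v_1)\), waiting is strictly worse than clearing immediately. The ratio of the two is \(e^{-rD}v_2/v_1\), which tends to \(0\) as \(D\to\infty\) for fixed values.

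Third, interpret the hypothesis "delays large relative to the decay timescale" as \(rD\gg 1\). Under it, the factor \(e^{-rD}\) controls the effective-welfare ratio relative to \(\opt_{\mathrm{all}}\) and relative to any rule that does not wait.

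The main obstacle is that the corollary is stated informally. The step needing the most care is fixing what "lose a nontrivial amount" means, so that the statement is not vacuous. I would formalize it as: for fixed values, the effective-welfare ratio of any common-horizon rule is at most \(e^{-rD}\) times the raw benchmark, and it is strictly below that of immediate clearing once \(D>(1/r)\log(v_2/v_1)\). I would also check the edge case where the highest-value bidder is the nearest one. There the common-horizon rule still waits until \(\tau_H\), so the factor \(e^{-rD}\) still applies, and the bound holds uniformly rather than only in the constructed instance.
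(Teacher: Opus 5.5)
Your proof is correct, and it proves more than the paper does. The paper's proof is a one-line appeal to Proposition~\ref{prop:sync-waiting-cost}, which is only existential. For each $\varepsilon$ it builds the two-bidder instance with $D=r^{-1}\log(1/\varepsilon)$, where waiting attains effective ratio exactly $\varepsilon$ against $\opt_{\mathrm{all}}$. The $e^{-rD}$ dependence shows up only through that choice of $D$, and in that instance waiting merely ties immediate clearing.

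Your Step 1 instead gives a bound for every instance: $\mathrm{SW}_{\mathrm{eff}}\le \opt_{\mathrm{all}}e^{-rD}$ for any common-horizon rule. This directly justifies ``grows through the factor $e^{-rD}$''. Your Step 2 shows that waiting is strictly worse than immediate clearing once $D>r^{-1}\log(v_2/v_1)$. That gives the ``nontrivial loss'' claim real content, since the raw benchmark penalizes every rule with positive latency.

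The price is that Step 1 rests on $T_{\mathrm{clear}}\ge D$. You should justify this through the latest feasible arrival, not the allocated bidder. Because the rule commits at $\tau_H\ge T_H(v_j,\tau_j)$ for every feasible $j$, you get $T_{\mathrm{clear}}\ge \tau_H-\min_i\tau_i\ge \tau_j+d(v_j,v_H)-\min_i\tau_i\ge d(v_j,v_H)$. So $T_{\mathrm{clear}}$ is at least the largest feasible delay, and hence at least the delay gap. This is also the real reason your edge case works.

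One overreach remains. In Step 3, $e^{-rD}$ controls the ratio only relative to $\opt_{\mathrm{all}}$. Against a non-waiting rule, the ratio is $e^{-rD}$ times a value ratio, as your own Step 2 computation $e^{-rD}v_2/v_1$ shows. A non-waiting rule that allocates to a low-value nearby bidder can even do worse than waiting. So the phrase ``relative to any rule that does not wait'' should be dropped or restricted to the constructed instance.
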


\begin{proof}
The conclusion is immediate from Proposition~\ref{prop:sync-waiting-cost}.
\end{proof}

The point of Proposition~\ref{prop:sync-waiting-cost} is interpretive rather than universal: it does not claim that every waiting-based rule is poor on every instance. It shows, however, that explicit synchronization can become the dominant source of efficiency loss in high-delay regimes, which motivates designs that correct timing advantage without mandatory holdback.

\subsection{Implication of the Exponential Uniqueness Result}

We next restate the role of the exponential rule in a deliberately narrow form aligned with the main theorems.

\begin{definition}[Time-consistent multiplicative score]
\label{def:time-consistent}
A fixed-slack scoring rule is \emph{time consistent} if bids are ranked by \(b_i^{\mathrm{val}}\phi(\delta_i)\) for some scalar discount \(\phi\), and \(\phi(\delta_1+\delta_2)=\phi(\delta_1)\phi(\delta_2)\). The ranking therefore depends only on total slack and composes across path segments.
\end{definition}

\begin{proposition}[Consequence of exponential uniqueness]
\label{prop:exponential-optimality}
Consider a fixed-slack single-item mechanism that ranks bids by \(b_i^{\mathrm{val}}\phi(\delta_i)\), where \(\phi\) is continuous, decreasing, and time consistent. If the slack variable is the Lorentz-invariant proper-time quantity from Section~\ref{subsec:horizon-slack}, then Theorem~\ref{thm:exponential-unique} implies \(\phi(\delta)=e^{-\lambda\delta}\) for some \(\lambda>0\). Within this admissible class, the pointwise welfare-maximizing ranking is therefore the LIA score \(b_i^{\mathrm{val}}e^{-\lambda\delta_i}\).
\end{proposition}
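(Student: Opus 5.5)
The plan is to reduce the proposition to Theorem~\ref{thm:exponential-unique} by checking its hypotheses, and then to read off what ``pointwise welfare-maximizing within the class'' can mean once the class has collapsed to a one-parameter family.

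\textbf{Step 1: verify the hypotheses of Theorem~\ref{thm:exponential-unique}.}
\begin{itemize}
\item Lorentz invariance holds because \(\phi\) is evaluated only at \(\delta_i\). By Definition~\ref{def:horizon-slack}, \(\delta_i\) is a proper-time interval on \(\gamma_H\).
\item Time consistency is assumed directly through Definition~\ref{def:time-consistent}.
\item For the normalization \(\phi(0)=1\), set \(\delta_1=\delta_2=0\) in the semigroup equation to get \(\phi(0)=\phi(0)^2\). Since \(\phi\) takes values in \((0,1]\), this forces \(\phi(0)=1\). I would add a remark that \(\phi>0\) is part of the standing assumption \(\phi:\mathbb{R}_{\ge0}\to(0,1]\). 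Without it, \(\phi\equiv 0\) would trivially satisfy the equation.
\item The statement says ``decreasing'', but Theorem~\ref{thm:exponential-unique} needs strict decrease. I would close this gap as follows. If \(\phi(\delta_0)=1\) for some \(\delta_0>0\), the semigroup property gives \(\phi(k\delta_0)=1\) for all integers \(k\), so \(\phi\) is constant. That is the degenerate case \(\lambda=0\). Excluding it, a continuous multiplicative \(\phi\) with values in \((0,1]\) is strictly decreasing. So I will read ``decreasing'' as nonconstant decreasing.
\end{itemize}
With the hypotheses in place, Theorem~\ref{thm:exponential-unique} yields \(\phi(\delta)=e^{-\lambda\delta}\) with \(\lambda>0\).

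\textbf{Step 2: handle the optimality claim.} Once \(\phi\) is pinned down, the admissible class is exactly the LIA family indexed by \(\lambda\). The key point is that for a fixed admissible \(\phi\), a single-item rule that ranks by \(b_i^{\mathrm{val}}\phi(\delta_i)\) and allocates to the argmax maximizes the score-weighted objective \(\sum_i x_i b_i^{\mathrm{val}}\phi(\delta_i)\) over feasible allocations \(x\) with \(\sum_i x_i\le1\). This is trivial: a linear objective over the simplex is maximized at the vertex with the largest coefficient, and the coefficients are nonnegative, so giving the item to the top score is optimal. Hence the LIA score is pointwise (instance by instance) optimal for the discounted welfare it induces. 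I would also invoke Theorem~\ref{thm:welfare-bound} to note that the true-value loss of this rule is controlled by \(e^{-\lambda\Delta}\).

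\textbf{Main obstacle.} The hardest part is interpretive. ``Pointwise welfare-maximizing ranking'' cannot mean maximizing raw \(\theta\)-welfare across all \(\lambda\); that would be achieved by \(\lambda\to0\), which lies outside the class. I would therefore state explicitly that optimality is relative to the discounted objective, with \(\lambda\) a fixed design parameter. The proposition then follows as a corollary of uniqueness plus the vertex argument.
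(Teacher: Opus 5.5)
Your proposal is correct and matches the paper's own argument: invoke Theorem~\ref{thm:exponential-unique} to force \(\phi(\delta)=e^{-\lambda\delta}\), then note that in a single-item setting maximizing discounted value means allocating to the largest score \(b_i^{\mathrm{val}}e^{-\lambda\delta_i}\), which is the LIA rule. Your additional checks fill in details that the paper's sketch leaves implicit, and they are sound: you derive \(\phi(0)=1\) from the semigroup equation, and you exclude the constant case \(\lambda=0\) so that ``decreasing'' gives the strict monotonicity the theorem requires.
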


\begin{proof}[Proof sketch]
The first claim is exactly Theorem~\ref{thm:exponential-unique}. Once \(\phi\) is fixed to the exponential family, any admissible weighted single-item rule that maximizes discounted value must allocate to the bidder with the largest score \(b_i^{\mathrm{val}}e^{-\lambda\delta_i}\), which is the LIA allocation rule.
\end{proof}

This proposition is intentionally narrower than a global optimality claim over all conceivable mechanisms. Its role is to show that, once the symmetry and composition axioms are imposed, the exponential rule is forced and the LIA ranking is the natural welfare-maximizing implementation within that restricted design class.

\subsection{Fairness--Efficiency Tension}

Finally, \(\lambda\) governs an intrinsic fairness--efficiency tension: larger \(\lambda\) more aggressively suppresses timing rents, but also increases the discount spread across slacks, which can reduce allocative welfare. The welfare bound in Theorem~\ref{thm:welfare-bound} makes this dependence explicit through the factor \(e^{-\lambda\Delta}\), and the empirical sweeps in Section~\ref{subsec:lambda-sensitivity} show that modest \(\lambda\) values (e.g., \(\lambda=1\,\mathrm{s}^{-1}\) in Starlink/Internet) can eliminate measured timing rents with only a small welfare impact.

\section{Discussion and Future Directions}
\label{sec:discussion}

\changed{Our theoretical analysis and experimental results demonstrate that the Lorentz-Invariant Auction (LIA) provides a principled solution to the challenge of resource allocation in heterogeneous-delay environments. In this section, we discuss the broader implications of our work, acknowledge limitations, and outline directions for future research.}

\subsection{Implications for Telecommunication System Design}

\changed{The LIA mechanism has several important implications for the design of modern telecommunication systems:}

\paragraph{Fairness Across Network Segments"}

\changed{By neutralizing the strategic value of proximity advantages, LIA promotes fairness across different network segments and geographical locations. This is particularly important in global telecommunication systems, where users may be distributed across continents with vastly different distances to infrastructure nodes. The mechanism ensures that resources are allocated based on true valuations rather than arbitrary timing advantages, promoting equitable access to communication resources.}

\paragraph{Reduced Infrastructure Arms Races:}

\changed{The reduced Latency Arbitrage Index (LAI) achieved by LIA suggests that it can help mitigate costly infrastructure arms races. In traditional auction systems, participants have strong incentives to invest in reducing their communication delays, leading to socially wasteful expenditures on specialized infrastructure that provides no aggregate benefit. By reducing these incentives, LIA can help redirect investments toward more productive uses, such as expanding network capacity or improving service quality.}

\paragraph{Improved Resource Utilization:}

\changed{The near-optimal welfare performance of LIA indicates that it can significantly improve resource utilization compared to synchronous mechanisms like HoldBack. By avoiding artificial delays and making allocation decisions as soon as sufficient information is available, LIA enables more efficient use of time-sensitive resources such as communication bandwidth, computational capacity, and spectrum allocations. This improved utilization can translate into higher throughput, lower latency, and better quality of service for end users.}

\paragraph{Scalability to Extreme Delay Environments:}

\changed{Our results on the DSN-30 topology demonstrate that LIA remains effective even in environments with extreme delay heterogeneity, such as deep space networks. This scalability makes the mechanism suitable for emerging applications in interplanetary communication, where traditional synchronous approaches become impractical due to the enormous delays involved. As human activity extends further into the solar system, mechanisms like LIA will become increasingly important for coordinating resource allocation across vast distances.}

\subsection{Limitations and Challenges}

\changed{Despite its strong theoretical properties and empirical performance, the LIA mechanism has several limitations that should be acknowledged:}

\paragraph{Parameter Tuning:}

\changed{The mechanism requires setting the parameter $\lambda$ (in ms$^{-1}$), which controls the trade-off between efficiency and fairness. While our experiments show that performance is relatively robust to the specific choice of $\lambda$, finding the optimal value for a particular application context may require careful tuning and domain expertise. Future work could explore adaptive approaches that automatically adjust $\lambda$ based on observed network conditions and performance metrics.}

\paragraph{Implementation Complexity:}

\changed{Implementing LIA in real-world telecommunication systems requires accurate measurement of horizon slacks, which may be challenging in dynamic or adversarial environments. While our error robustness results show that the mechanism degrades gracefully with measurement errors, ensuring accurate timing information remains an important practical challenge. In particular, time synchronization and timestamping should be protected against delay attacks and clock manipulation, e.g., via NTS-secured NTP \cite{rfc8915} or authenticated time protocols such as Roughtime \cite{ietfroughtime}. Secure timestamping protocols and distributed verification mechanisms may be needed to maintain the integrity of the auction process.}

\paragraph{Fairness Metric Scope (LAI):}
\changed{
LAI is intentionally \emph{decision-theoretic}: it measures the utility gain from an improvement in effective delay (equivalently, earlier arrival at the horizon) while holding values fixed. It therefore captures incentives for latency reduction and propagation-driven timing rents, but it does not, by itself, model the full economics of latency investment (e.g., nonlinear cost functions, strategic R\&D, or equilibrium entry). Moreover, LAI does not capture all informational advantages that can arise in richer dynamic bidding games; it is best interpreted as a \emph{rent proxy} for speed advantages under the one-shot counterfactual.
}
\changed{For multi-item and combinatorial settings, LAI can be generalized by evaluating the same counterfactual delay improvement while recomputing the bidder’s optimal bundle choice, but the resulting index depends on the bundle space and the value model. In the present single-item study, we therefore report LAI alongside complementary, topology-aware fairness indicators (e.g., dispersion of winning horizon slacks) and treat broader fairness generalizations as future work.}

\paragraph{Multi-Item and Combinatorial Settings:}

\changed{Our current analysis focuses primarily on single-item auctions, with a brief extension to multiple identical items. Many telecommunication applications involve more complex resource allocation problems, such as combinatorial auctions where bidders have preferences over bundles of heterogeneous items. Extending LIA to these settings while maintaining its incentive and fairness properties is a non-trivial challenge that requires further research.}

\paragraph{Strategic Complexity:}

\changed{The present theorem is intentionally conditional on infrastructure-determined slacks. If bidders can strategically choose access point, route class, relay, packetization, or emission timing so as to change \(\delta_i\), then the environment becomes multi-parameter and LIA is no longer automatically dominant-strategy truthful over the full action space. This is not a hidden defect of the proof; it is exactly why the deployment architecture binds bids to trusted ingress and why endogenous slack manipulation is isolated as a separate research problem. Understanding that larger strategic game---and designing mechanisms that remain robust when bidders can shape their own slack inputs---is an important direction for future work. Finally, as large language models (LLMs) increasingly influence strategic mechanism design in telecommunications \cite{lotfi2025rethinking}, exploring the intersection of data-driven LLM approaches with LIA's principled analytical framework presents an exciting frontier for future research.}

\subsection{Future Research Directions}

\changed{Our work opens up several promising directions for future research:}

\paragraph{Dynamic and Online Mechanisms:}

\changed{Extending LIA to dynamic and online settings, where resources arrive and depart over time and bidders may have evolving valuations, would significantly broaden its applicability. This extension would require careful consideration of how to maintain incentive compatibility and fairness in the presence of strategic timing decisions and evolving information.}

\paragraph{Learning-Based Approaches:}

\changed{Incorporating machine learning techniques could enhance LIA's performance by adapting to observed patterns in network conditions and bidder behavior. For example, predictive models could be used to estimate future network delays, allowing for more accurate computation of horizon slacks and more efficient allocation decisions. Learning-based approaches could also help optimize the parameter $\lambda$ based on observed performance metrics.}

\paragraph{Distributed Implementation:}

\changed{Developing fully distributed implementations of LIA, where allocation and payment decisions are made without a central auctioneer, would enhance its robustness and scalability. This would require careful design of consensus protocols that maintain the incentive properties of the mechanism while operating in a decentralized manner. Blockchain-based approaches might provide a foundation for such implementations, leveraging their built-in consensus mechanisms and tamper-resistant record-keeping.}

\paragraph{Richer Preference Structures:}

\changed{Extending LIA to handle richer preference structures, such as multi-dimensional types, budget constraints, or risk attitudes, would make it applicable to a wider range of telecommunication scenarios. This extension would require generalizing the theoretical framework to account for these more complex preference structures while maintaining the core principles of Lorentz invariance and causal consistency.}

\paragraph{Integration with Network Control Planes:}

\changed{Integrating LIA with existing network control planes, such as software-defined networking controllers or 5G network function virtualization frameworks, would facilitate practical deployment in operational telecommunication systems. This integration would require developing standardized interfaces and protocols for exchanging bid information, computing horizon slacks, and enforcing allocation decisions across heterogeneous network elements.}

\section{Conclusion}
\label{sec:conclusion}

\changed{This paper has presented the Lorentz-Invariant Auction (LIA), a novel mechanism for resource allocation in telecommunication networks with heterogeneous delays. By embedding bid events in Minkowski spacetime and discounting values by proper-time lapses, LIA neutralizes proximity and propagation advantages without artificial delay, thereby suppressing propagation-driven timing rents (as quantified by the LAI) while maintaining strong incentive and efficiency properties.}

\changed{Our theoretical analysis establishes the following scoped claim: once slacks are fixed by trusted infrastructure, LIA is dominant-strategy truthful in reported values, individually rational, and achieves a welfare approximation of \(\sw/\opt_{\mathrm{feas}} \geq e^{-\lambda\Delta}\), where \(\lambda\) is the discount rate and \(\Delta\) is slack dispersion. We also prove that the exponential discount function emerges uniquely from natural invariance and composition properties, which explains why linear, rational, and polynomial alternatives do not satisfy the same axioms. For the single-item setting studied here, winner determination and payments reduce to a linear pass over discounted bids, i.e., \(O(n)\) per auction instance given an up-to-date distance-to-horizon map. Precomputing (or periodically refreshing) these distances is a standard shortest-path problem (e.g., \(O(|E|\log|V|)\) per snapshot via Dijkstra), amortized across auctions.}

\changed{Our experimental evaluation on three representative topologies---STARLINK-200, INTERNET-100, and DSN-30---uses (i) 52{,}500 instances for market sizes $n\in\{10,20,30,40,50\}$, (ii) robustness sweeps under bounded slack-estimation error $\varepsilon\in[0,10]$\,ms (iid, common-mode clock-bias, distance-biased, and subnetwork-correlated models), and (iii) a large-market stress test at $n=1000$. Across Starlink and Internet, LIA ($\lambda=1\,\mathrm{s}^{-1}$) achieves near-optimal welfare while suppressing timing rents: at $n=50$, $\mathrm{SWR}=\num{0.9970}$ (Starlink) and $\num{0.9988}$ (Internet) with $\sup g=0$, compared to Sync-VCG's slightly higher welfare but a small residual rent (Starlink $\sup g=\num{0.026}$; Internet $\sup g=\num{0.021}$). Relative to Fast-VCG, the paired welfare gain over the full \(n\in\{10,20,30,40,50\}\) sweep is about \num{0.473} on Starlink and \num{0.475} on Internet. On DSN-30, LIA's welfare is lower in thin markets (e.g., $\mathrm{SWR}=\num{0.7687}$ at $n=10$) but rises with depth ($\num{0.9462}$ at $n=50$ and $\num{0.9960}$ at $n=1000$), even though full waiting still achieves the highest raw welfare in that extreme regime. Winner determination remains microsecond-scale for $n\le 50$ and below $0.2$\,ms at $n=1000$ on commodity hardware, and the robustness sweeps show that these conclusions persist under bounded slack errors, including structured network noise.}

\changed{These results indicate that enforcing causal consistency at the mechanism layer can significantly improve the fairness and efficiency of resource allocation in modern telecommunication systems. By reconciling economic incentives with the physical constraints of communication networks, LIA provides a principled solution to the challenges of bandwidth clearing, spectrum allocation, and computational resource scheduling in heterogeneous-delay environments. As telecommunication networks continue to span greater distances and exhibit more diverse latency characteristics, mechanisms like LIA will become increasingly important for ensuring fair and efficient resource allocation across the global communication infrastructure.}


\end{document}